\pgfplotsset{compat = newest}
\newcommand{\cmark}{\ding{51}}%
\newcommand{\xmark}{\ding{55}}
\newcommand{\smark}{\ensuremath{\bm{\thicksim}}}
\newcolumntype{C}{>{\centering\arraybackslash}X}
\newcolumntype{R}{>{\raggedleft\arraybackslash}X} 
\newcolumntype{L}{>{\raggedright\arraybackslash}X} 
\newcolumntype{P}[2]{%
  >{\begin{turn}{#1}\begin{minipage}{#2}\raggedright}l%
  <{\end{minipage}\end{turn}}%
}
\newcolumntype{u}[1]{%
  >{\begin{minipage}{#1}\raggedleft}l%
  <{\end{minipage}}%
}
\newcommand{\mc}[1]{\mathcal{#1}}
\newcommand{\m}[1]{\mathsf{#1}}
\newcommand{\tup}[1]{\langle{#1}\rangle}
\newcommand{\events}{E}
\newcommand{\objects}{\mathcal O}
\newcommand{\otypes}{\Sigma_{obj}}
\newcommand{\ltypes}{\Sigma_{list}}
\newcommand{\activities}{\mc A}
\newcommand{\timestamps}{\mathbb T}
\newcommand{\proj}{\pi}
\newcommand{\projact}{\proj_{\mathit{act}}}
\newcommand{\projobj}{\proj_{\mathit{obj}}}
\newcommand{\projtime}{\proj_{\mathit{time}}}
\newcommand{\projtrace}{\proj_{\mathit{trace}}}
\newcommand{\projvalue}{\proj_{\mathit{val}}}
\newcommand{\otype}{\vartype}
\newcommand{\pre}[1]{\bullet{#1}}
\newcommand{\post}[1]{{#1}\bullet}
\newcommand{\logtrace}{\mathbf e}
\newcommand{\goto}[1]{\mathrel{\raisebox{-2pt}{$\xrightarrow{#1}$}}}
\newcommand{\co}[1]{\mathtt{#1}} % typesetting of concrete objects
\newcommand{\GG}{\mc G} % graphs
\newcommand{\LL}{\mc L} % language
\newcommand{\NN}{\mc N} % nets
\newcommand{\exaref}[1]{Ex.~\ref{exa:#1}}
\newcommand{\defref}[1]{Def.~\ref{def:#1}}
\newcommand{\lemref}[1]{Lem.~\ref{lem:#1}}
\newcommand{\figref}[1]{Fig.~\ref{fig:#1}}
\newcommand{\tabref}[1]{Tab.~\ref{tab:#1}}
\tikzstyle{place}=[draw, circle, inner sep=1.5pt, line width=.7pt, scale=.8, minimum width=6mm]
\tikzstyle{trans}=[draw, rectangle, inner sep=1.5pt, line width=.7pt, scale=.8, minimum width=6mm, minimum height=6mm, fill=gray!10]
\tikzstyle{arc}=[draw, ->, line width=.5pt]
\tikzstyle{fatarc}=[draw, ->, line width=.5pt, double]
\tikzstyle{action}=[scale=.6]
\tikzstyle{starttoken}=[regular polygon, regular polygon sides=3,minimum width=2mm,fill=black,inner sep=0pt,rotate=30]
\tikzstyle{endtoken}=[regular polygon, regular polygon sides=4,minimum width=2mm,fill=black,inner sep=0pt]
\colorlet{ocolor}{blue!70!green!20}
\colorlet{odcolor}{blue!80!green!40}
\colorlet{oscolor}{magenta!90!red!20}
\colorlet{icolor}{yellow!90!red!20}
\colorlet{oicolor}{green!90!blue!20}
\tikzstyle{oplace}=[place,fill=ocolor]
\tikzstyle{odplace}=[place,fill=odcolor]
\tikzstyle{osplace}=[place,fill=oscolor]
\tikzstyle{otrans}=[trans,fill=ocolor]
\tikzstyle{iplace}=[place,fill=icolor]
\tikzstyle{itrans}=[trans,fill=icolor]
\tikzstyle{oiplace}=[place,fill=oicolor]
\tikzstyle{idplace}=[place,fill=red!30] 
\tikzstyle{dplace}=[place,fill=magenta!20]             
\tikzstyle{oidplace}=[place,fill=black!20]
\tikzstyle{oitrans}=[trans,fill=oicolor]
\tikzstyle{mixtrans}=[trans,shading = axis,rectangle, left color=ocolor, right color=icolor,shading angle=135]
\tikzstyle{oimixtrans}=[trans,shading = axis,rectangle, left color=ocolor, right color=oicolor,shading angle=135]
\tikzstyle{insc}=[scale=.8]
\tikzstyle{splitme}=[rectangle split, rectangle split horizontal,rectangle split parts=2]
\tikzstyle{splitmemore}=[rectangle split, rectangle split horizontal,rectangle split parts=3]
\tikzstyle{log}=[fill=white]
\tikzstyle{model}=[fill=gray!10]
\newcommand{\objtuples}{\vec {\mathcal{TOK}}}
\newcommand{\dom}{\mathit{dom}}
\newcommand{\vartype}{\mathit{type}}
\newcommand{\eventlog}{L}
\renewcommand{\mod}{{\mathit{mod}}}
\newcommand{\cost}{{\mathit{cost}}}
\newcommand{\run}{\rho}
\newcommand{\moves}{\mathit{moves}}
\newcommand{\SKIP}{{\gg}}
\newcommand{\restrlog}{|_{\mathit{log}}}
\newcommand{\restrmod}{|_{\mathit{mod}}}
\newcommand{\nuvarset}{\Upsilon}
\newcommand{\listvarset}{\varset_{\mathit{list}}}
\newcommand{\setsym}[1]{\mathit{#1}}
\newcommand{\places}{P}
\newcommand{\inflow}{F_{in}}
\newcommand{\outflow}{F_{out}}
\newcommand{\transitions}{T}
\newcommand{\varset}{\mathcal{V}}
\newcommand{\funsym}[1]{\mathit{#1}}
\newcommand{\guass}{\funsym{guard}}
\newcommand{\coloring}{\funsym{color}}
\newcommand{\set}[1]{\{#1\}}
\newcommand{\allvars}{\mathcal{X}}
\newcommand{\invars}[1]{\setsym{\setsym{vars}_{in}}(#1)}
\newcommand{\outvars}[1]{\setsym{\setsym{vars}_{out}}(#1)}
\newcommand{\vars}[1]{\setsym{vars}(#1)}
\newcommand{\colors}{Col}
\newcommand{\Dom}{dom}
\newcommand{\marking}{M}
\newcommand{\listtype}[1]{[{#1}]}
\newcommand{\tracenet}{T}
\newcommand{\eid}[1]{\mathtt{\#}_{#1}}
\newcommand{\mynet}{DOPID\xspace}
\newcommand{\mynets}{DOPIDs\xspace}
\newcommand{\anet}{a \mynet}
\newcommand{\eqn}{\,{=}\,}
\newcommand{\rational}{\mathtt{rat}}
\newcommand{\bool}{\mathtt{bool}}
\newcommand{\integer}{\mathtt{int}}
\newcommand{\str}{\mathtt{string}}
\newcommand{\constraint}{c}
\newcommand{\inn}{\,{\in}\,} % narrow \in, gives sometimes better spacing
\newcommand{\fset}{\mathtt{finset}}
\newcommand{\varsin}[1]{\setsym{vars}(#1)}
\newcommand{\assignments}[1][\varset]{\mathit{Assign}^{#1}}
\newcommand{\extended}[1]{#1}
\begin{document}
\title{Object-centric Processes with\\ Structured Data and Exact Synchronization\\  \extended{(Extended Version)}}
\subtitle{Formal Modelling and Conformance Checking}
\titlerunning{Object-centric processes with structured data and exact synchronization}
% If the paper title is too long for the running head, you can set
% an abbreviated paper title here
%
\author{Alessandro Gianola\inst{1}%\orcidID{0000-1111-2222-3333} 
\and
Marco Montali\inst{2}%\orcidID{1111-2222-3333-4444} 
\and
Sarah Winkler\inst{2}%\orcidID{2222--3333-4444-5555}
}
\authorrunning{A. Gianola et al.}
% First names are abbreviated in the running head.
% If there are more than two authors, 'et al.' is used.
%
\institute{INESC-ID/Instituto Superior Técnico, Universidade de Lisboa, Portugal 
\email{alessandro.gianola@tecnico.ulisboa.pt}\\
%\url{http://www.springer.com/gp/computer-science/lncs} 
\and
Free University of Bozen-Bolzano, Italy\\
\email{\{montali,winkler\}@inf.unibz.it}}
\maketitle              % typeset the header of the contribution
\begin{abstract}
Real-world processes often involve interdependent objects
%, also carrying on sophisticated 
that also carry
data values, such as integers, reals, or strings. %, or relational structures. %Capturing the full complexity of such processes requires process mining techniques that integrate object-centric and data-aware perspectives. 
However, existing process formalisms fall short to combine key modeling features, such as tracking object identities, supporting complex datatypes,  handling dependencies among them, and object-aware synchronization.
Object-centric Petri nets with identifiers (OPIDs) partially address these needs but treat objects as unstructured identifiers (e.g., order and item IDs), overlooking the rich semantics of complex data values (e.g., item prices or other attributes).
To overcome these limitations, we introduce data-aware OPIDs (DOPIDs), a framework that strictly extends OPIDs by incorporating structured data manipulation capabilities, and full synchronization mechanisms. %DOPIDs unify control-flow, object relationships, and complex data dependencies in a single setting.
In spite of the expressiveness of the model, we show that it can be made operational: 
%considering in particular the task of conformance checking.
Specifically, we define %conformance checking for DOPIDs and present 
a novel conformance checking approach leveraging satisfiability modulo theories (SMT) to compute data-aware object-centric alignments. %, bridging the gap between object-centric and data-aware conformance checking.

\keywords{Object-centric conformance checking  \and Universal Synchronization \and Data-aware Processes \and Complex Datatypes \and SMT}
\end{abstract}

\section{Introduction}
\label{sec:intro}

In recent years, research in 
information systems supporting the execution of business and work processes has increasingly stressed the need for multi-perspective process models. Prominently, the goal has been to tackle intricate links between the control flow
of a process, and the data 
with which the control flow interacts. 

Several new process modelling paradigms and corresponding languages consequently emerged, ranging from case-handling \cite{AaWG05,HewW16} to artifact-centric \cite{CohH09,HBGV13} and object-aware approaches \cite{KunzleR11,SVDD23,Breitmayer0PR24}. In parallel,
%\NEW{The PHILharmonicflows framework~\cite{} is an early development of an object-centric modelling language, also allowing structured data properties, but it lacks a formal semantics and thus support for conformance checking.}
%This calls for investigating how data influence the process behavior, and how the control flow of the process impacts on the data queried and manipulated by activities and decision points. 
%The 
% Data dimensions can be 
% manifold:
% case variables carrying data attributes, multi-case interactions with objects involved in one-to-many and many-to-many relationships, and interactions with a persistent, global storage like a relational database.  
%
%In this 
% spectrum, 
%context,
a growing stream of research has spawned different formal models of data-aware processes with the twofold aim to support \emph{representation} and \emph{computation}, on the one hand covering relevant mo\-del\-ling constructs, on the other 
providing
effective algorithmic techniques for process analysis and mining, such as  
automated discovery and conformance checking. 
Within this stream, representations typically rely on data-aware extensions of Petri nets, the most widely employed formalism for describing, analysing, and mining processes, with two emerging directions.

The first focuses on enriching case-centric processes by incorporating structured case attributes (e.g., the price of a product, the age and name of a customer, as well as more complex structures such as a persistent relational storage). This supports expressing how activities in the process read and write these variables, and how decision points use these variables to express routing conditions for cases. A prime example in this vein is that of Data Petri nets (DPNs \cite{MannhardtLRA16,LeoniFM18}).

The second direction 
%has instead the purpose 
aims instead at lifting the case-centricity assumption, tackling so-called \emph{object-centric processes} where multiple objects, interconnected via complex one-to-many and many-to-many relationships, are co-evolved by the process (e.g., orders containing multiple products, shipped in packages that may mix up products from different orders). It has been 
pointed out that straight-jacketing this complexity through a single case notion yields misleading process analysis and mining results \cite{Aalst19,BeMA23}. Several proposals have been brought forward in this direction, such as object-centric Petri nets \cite{AalstB20}, synchronous proclets \cite{Fahland19}, and Petri nets with identifiers (PNIDs) \cite{PWOB19,WerfRPM22,GianolaMW24}, possibly equipped with an external relational storage \cite{MonR17,GhilardiGMR22}.
To a varying extent, they cover essential modelling features like:
\begin{inparaenum}[\it (i)]
\item tracking objects and their possibly concurrent flows, enabling the independent progression of objects (e.g. package shipments and order notifications);
\item object creation and manipulation, handling dependencies such as one-to-one and one-to-many relationships (e.g., adding items to an order or splitting it into multiple packages);
 \item object-aware full synchronization, creating flow dependencies on objects, where an object can flow through an activity only if some (\emph{subset synchronization}) or all (\emph{exact synchronization}) related objects simultaneously flow through that activity; this ensures that an object can proceed through an activity only when certain conditions are met (e.g., initiating order billing only when some or all associated packages have been delivered).
\end{inparaenum}

Two main open challenges emerge in 
the tradeoff of these two directions.
%the second, object-centric direction, and in its connection with the former. 
First, there is a lack of object-centric models that offer both object relationship manipulation and corresponding synchronization mechanisms: while approaches based on PNIDs provide fine-grained constructs for handling objects and their mutual relationships, they fall short in fully addressing synchronization in its different flavors; complementarily, alternative approaches in the object-centric spectrum suffer from under-specification issues \cite{Aals23,GianolaMW24}. Second, object-centric models completely lack support for attributes and corresponding data conditions.

\emph{The ultimate goal of this work is to address these two research questions through a unified formalism supporting at once fine-grained modelling features for objects, relationships, attributes, and complex data conditions to express transition guards, subset/exact synchronization, and combinations thereof}.
The following example inspired by \cite{GianolaMW24,AalstB20} motivates the need for such a formalism:

\begin{example}
\label{exa:intro}
In an order-to-shipment process, executions involve the following activities:
\begin{inparaenum}[\it (i)]
\item $\m{place\:order}$ creates a new order with an arbitrary number of products, and the customer can indicate the number of days expected for delivery;.
\item Payment can be done via credit card or bank transfer, reflected by $\m{pay\:cc}$ and $\m{pay\:bt}$, respectively. However, the former is only applicable if the total cost of the order is below 1000\texteuro.
\item $\m{pick\:item}$ fetches a product from the warehouse;
\item $\m{ship}$ collects an order with all its products for shipment. It also determines one of two shipment modes, namely $\m{car}$ or $\m{truck}$, depending on whether the number of days for delivery is below or above 5.
\end{inparaenum}

This example requires a variety of sophisticated modelling features:
during order creation, an arbitrary number of product objects can be included (in the sequel we call this \emph{multi-object transfer}); items are associated with an order (\emph{object relations}), and \emph{all} items of an order must be included in shipment (in the sequel called \emph{exact synchronization}, as opposed to \emph{subset synchronization}); one needs to reason over arithmetic data like the cost of a product and the total cost of an order, and transition guards are needed (\emph{structured data support}).
%Moreover, objects like orders must be created and removed during process executions.
\end{example}

Specifically,  we provide a twofold contribution in \emph{representation} and \emph{computation}, to handle such processes. As for \emph{representation}, we start from  OPIDs \cite{GianolaMW24}, the most sophisticated PNID-based formalism: they support all main object-centric modelling features except exact synchronization. We lift OPIDs into a new class of PNIDs called \mynets, which at once close the gap regarding synchronization, and 
%support infusing in the net also data values with 
add rich data support for
a variety of data types, together with conditions expressed over such data that can involve arithmetic, uninterpreted functions, object properties, and  advanced forms of aggregation. Aggregation emerges as a natural, non-trivial new modelling construct arising from the interplay between data conditions \`a la DPNs, and the fact that they are now applied over the attributes of possibly multiple objects at once. As for \emph{computation}, we consider conformance checking, and show that
existing SMT-based techniques 
%for conformance checking 
can be lifted to fully cover all features of DOPIDs.
%cover all the newly introduced features. 
We do so by integrating and extending the SMT encodings for conformance checking separately studied for OPIDs \cite{GianolaMW24} and DPNs \cite{FelliGMRW23,FelliGMRW23b}.
%covering the full complexity of \mynets. 
Finally, we provide a novel proof-of-concept implementation of our approach to witness its feasibility.

\section{Related Work and Modelling Features}
\label{sec:related}

%Our proposed formalism builds upon and integrates two interconnected research streams in process modeling: one focused on case-centric data-aware extensions of process models, and the other aimed at generalizing process models to support object-centric processes.
To highlight key modeling features, %of \mynets, 
we reviewed literature on Petri nets enriched with case attributes \cite{FelliLM21,CalvaneseGGMR19,Gianola23} and object-centric features \cite{Aalst19,BeMA23,Aalst23}. 
\tabref{features} shows a summary of these features and their implementation in various approaches. At the end of Section~\ref{sec:nets} we discuss the expressivity of DOPIDs more generally.
%provide more general considerations on the modelling power of our proposal. %\todo{Put RIGHT SECTION NO}

\newcommand{\header}[1]{
  \multicolumn{1}{P{90}{1.9cm}}{#1}
}

\newcommand{\implicit}{\textsf{imp.}}
\newcommand{\explicit}{\textsf{exp.}}

\begin{table}[t]
\centering
\resizebox{.8\textwidth}{!}{
\begin{tabularx}{\textwidth}{
m{1.6cm}CCCCCCCCCCCCCCCCCCCCCCCCCc}
&
&
\header{object \\[-.7ex]creation}
&
&
\header{object \\[-.7ex]removal}
&
&
\header{concurrent\\[-.7ex] object flows}
&
&
\header{multi-object\\[-.7ex] transfer}
&
&
\header{multi-object\\[-.7ex] spawning}
&
&
\header{object \\[-.7ex]relations}
&
&
\header{subset\\[-.7ex] sync}
&
&
\header{exact\\[-.7ex] sync}
&
&
\header{coreference}
&
&
\header{struct. data\\ ~}
&
&
\header{object \\[-.7ex] reference}
&
&
\header{conformance}
&
&
%\header{conformance}
\\
\toprule

% res.-const. $\nu$-PNS
% \cite{SoSD22}
% &&
% \cmark
% &&
% \cmark
% &&
% \cmark
% &&
% \xmark
% &&
% \xmark
% &&
% \xmark
% &&
% \xmark
% &&
% \xmark
% &&
% \xmark
% &&
% \xmark
% &&
% \explicit
% &&
% \cite{SoSD22}
% \\
% \midrule
OC nets \cite{AalstB20}
&&
\cmark
&&
\cmark
&&
\cmark
&&
\cmark
&&
\cmark
&&
\xmark
&&
\xmark
&&
\xmark
&&
\xmark
&&
\xmark
&&
\implicit
&&
\cite{LissAA23}
\\
\midrule
synchronous proclets \cite{Fahland19}
&&
\cmark
&&
\cmark
&&
\cmark
&&
\smark
&&
\cmark
&&
\cmark
&&
\cmark
&&
\cmark
&&
\xmark
&&
\xmark
&&
\implicit
&&
\xmark
\\
\midrule
DPNs \cite{LeoniFM18}
&&
\cmark
&&
\cmark
&&
\cmark
&&
\xmark
&&
\xmark
&&
\xmark
&&
\xmark
&&
\xmark
&&
\xmark
&&
\cmark
&&
\implicit
&&
\cite{MannhardtLRA16}
\\
\midrule
PNIDs \cite{PWOB19,GhilardiGMR22,WerfRPM22}
&&
\cmark
&&
\cmark
&&
\cmark
&&
\smark
&&
\smark
&&
\cmark
&&
\smark
&&
\xmark
&&
\cmark
&&
\xmark
&&
\explicit
&&
\xmark
\\
\midrule
OPIDs \cite{GianolaMW24}
&&
\cmark
&&
\cmark
&&
\cmark
&&
\cmark
&&
\cmark
&&
\cmark
&&
\cmark
&&
\xmark
&&
\cmark
&&
\xmark
&&
\explicit
&&
\cite{GianolaMW24}
\\
\midrule
DABs \cite{CalvaneseGGMR19}
&&
\cmark
&&
\cmark
&&
\xmark
&&
\xmark
&&
\xmark
&&
\xmark
&&
\xmark
&&
\xmark
&&
\xmark
&&
\cmark
&&
no
&&
\xmark
\\
\midrule
\textbf{\mynets}
&&
\cmark
&&
\cmark
&&
\cmark
&&
\cmark
&&
\cmark
&&
\cmark
&&
\cmark
&&
\cmark
&&
\cmark
&&
\cmark
&&
\explicit
&&
\emph{here}
\\
\bottomrule
\end{tabularx}
}
\caption{Comparison of Petri net-based object-centric process modelling languages along main modelling features, tracking which approaches support conformance. {\cmark} indicates full, direct support, {\xmark} no support, and {\smark} indirect or partial support.}
\label{tab:features}
\vspace*{-7mm}
\end{table}

The first crucial feature is the incorporation of constructs for \emph{creating and deleting objects}. Different approaches vary based on whether objects are \emph{explicitly referenced} within the model or are only \emph{implicitly manipulated}. Another critical aspect is the ability of objects to \emph{flow concurrently} and independently; for example, items can be picked while their corresponding order is paid (cf. \emph{divergence} in \cite{Aalst19}). Additionally, models may support the \emph{simultaneous transfer of multiple objects} of the same type, such as processing several items in a single transaction. Transitions in these models must account for the manipulation of multiple objects, either of the same type or different types, at the same time (cf. \emph{convergence} in \cite{Aalst19}).
A type of convergence occurs when a single transition, given a parent object, \emph{generates an unbounded number of child objects} %from a parent object, where all children
that
are all linked to the parent; e.g., placing an order can create unboundedly many associated items. Once this parent-child \emph{one-to-many relationship} is established, other forms of convergence, such as synchronizing transitions, can be introduced. These transitions allow a parent object to evolve  only if \emph{some} (\emph{subset synchronization}) or \emph{all} its child objects (\emph{exact synchronization})  are in a certain state. In addition, advanced \emph{coreference} techniques can be employed to simultaneously examine and evolve multiple interconnected objects. Finally, an essential feature is the support for advanced and \emph{structured data types}, such as integers, reals, lists, and arrays. They enhance the 
objects 
manipulated by the process with additional information, allowing users to define complex constraints that act as guards for the transitions in the process model, possibly 
%enriching it with 
incorporating
background knowledge.
This final feature, unlike the others, is more characteristic of data-aware extensions of \emph{case-centric} process models \cite{MontaliC16}, where the focus is traditionally placed on the complex structure of data values, often governed by relational logical theories \cite{DamaggioDHV11}. Among these approaches, the most advanced framework is the DAB model \cite{CalvaneseGGMR19,GhilardiGMR21}, which supports rich forms of database-driven data and sophisticated forms of reasoning.
In process mining, multi-perspective models capable of incorporating richer data representations while expressing concurrent flows have also been introduced. A prominent example are Data Petri Nets (DPNs) \cite{FelliLM21}, a Petri net-based formalism that, while more expressive, remains case-centric.

Regarding object-centric models, several Petri net-based formal models have been introduced \cite{SoSD22,AalstB20,Fahland19,PWOB19,GhilardiGMR22,WerfRPM22}.
Resource-constrained $\nu$-Petri nets \cite{SoSD22} constitute the first formal model supporting a basic form of object-centricity, but without relationships between objects.
Object-centric nets \cite{AalstB20} offer an implicit approach to object centricity where places and transitions
%are associated with 
have
different object types. Simple arcs match with a single object at a time, while 
%double 
variable
arcs handle arbitrarily many objects of the same type. However, the lack of %mechanisms to track 
object relationships prevents modeling object synchronization and coreference. Alignment-based conformance checking for this approach is developed in \cite{LissAA23}.
Synchronous proclets \cite{Fahland19} offer a framework that can implicitly express the tracking of objects and their mutual relationships. They include specialized constructs to support the described types of convergence, including subset and exact synchronization, though other forms of coreference are not supported. Multi-object transfers are approximated through iteration, processing objects one by one. Conformance checking for proclets is implicitly tackled here for the first time, considering that \mynets generalize proclets.
Petri nets with identifiers (PNIDs), and variants, have been studied in \cite{PWOB19,GhilardiGMR22,WerfRPM22}, though without addressing conformance checking. PNIDs build upon $\nu$-Petri nets by explicitly managing objects and their relationships through identifier tuples. Unlike object-centric nets and proclets, PNIDs lack constructs for manipulating unboundedly many objects in a single transition. As demonstrated in \cite{GhilardiGMR22}, multi-object transfers, spawning, and subset synchronization can be achieved through object coreference and iterative operations. 
In OPIDS \cite{GianolaMW24}, multi-object transfer is natively supported, but no data and only subset synchronization; though during conformance checking, exact synchronization can be obtained as a by-product.
%However, exact synchronization necessitates data-aware wholeplace operations, which no variant of PNID  supports.
Indeed, no variant of PNID supports data-aware wholeplace operations as necessitated by exact synchronization.
\mynets strictly subsume OPIDS, extending them with 
full object-aware synchronization and
%complex, structured data capabilities and by supporting . 
rich data support.
This includes data of numeric, string, or free data types, and complex transition guards involving arithmetic and string operations as well as uninterpreted functions.
These features significantly boost expressivity of the process model.

\section{Object-Centric Event Logs with Data Attributes}
\label{sec:background}

We start from object-centric event logs as in \cite{AalstB20,GianolaMW24}, and enrich them with data attributes. To this end, we assume that data types are divided in two classes: a class $\Sigma_{obj}$ of \emph{object id} types, and a class $\Sigma_{val}$ of \emph{data-value} types.

Consistently with the literature, every object id type $\sigma \in \otypes$ has an (uninterpreted) domain $\Dom(\sigma) \subseteq \objects$, given by all object ids in $\objects$ of type $\sigma$. Such identifiers are used to refer to objects in the real world, and can be compared only for equality and inequality. Examples are order and product identifiers.

To capture the data attributes attached to objects and recorded in event logs, such as for example the price of a product and the delivery address of an order,
we also introduce \emph{data-value domains} for data-value types in $\Sigma_{val}$:
$\Dom(\bool) = \mathbb B$, the booleans;
$\Dom(\integer) = \mathbb Z$, the integers;
$\Dom(\rational) = \mathbb Q$, the rational numbers; and 
$\Dom(\str) = \mathbb S$, the strings over some fixed alphabet; unconstrained finite sets $\Dom(\fset_i)={\mathbb D}_i$, for some finite set ${\mathbb D}_i$, %(subsuming strings).
%; all the defined types are equipped with relational and function symbols defined over it. 
We will also support function and relation symbols over all these types to capture implicit properties of objects that are not explicitly manipulated by the process.
%While boolean, numerical, and string values are standard, relational and function symbols are particularly important not only towards generality of the approach, but also to conceptually capture implicit boolean, numerical, and string values that are attached to objects but not explicitly manipulated by the process. For example, for an order management process where every order gets attached to a delivery address (explicitly accounted and manipulated by the process) and an owner (implicitly assumed, but not directly manipulated by the process), one may opt for modelling the delivery address as a string, and the owner as a function taking an order id as its only argument. 
E.g., in an order management process where every order has a delivery address (explicitly manipulated by the process) and an owner (implicitly assumed, but not directly manipulated), one may opt for modelling the delivery address as a string, and the owner as an uninterpreted function taking an order id as its only argument. 

In addition to the sets $\otypes$ and $\Sigma_{val}$ for object and data-value types, we fix a set $\activities$ of activities and a set $\timestamps$ of timestamps equipped 
with a total order $<$.
We also consider (partial) assignments from a set of variables $\varset$ to elements of their domain. The set of all such assignments is denoted $\assignments[\varset]$.

\begin{definition}[Event log]
An  \emph{event log (with objects and attributes)} is a tuple
$\eventlog = \tup{\events, \objects, \projact, \projobj,\projtime, \projvalue}$
where:
\begin{inparaenum}[$(i)$]
\item $\events$ is a set of event identifiers;
\item $\objects$ is a set of object identifiers that are typed by a function $\otype\colon \objects \to \otypes$;
\item the functions $\projact\colon E \to \activities$, $\projobj\colon E \to \mc P(\objects)$, and $\projtime\colon E \to \timestamps$ associate
each event $e\in\events$ with an activity, a set of affected objects, and a timestamp, respectively, such that
for every $o\in \objects$ the timestamps $\projtime(e)$ of all events $e$ such that $o \in \projobj(e)$ are all different;
\item the function $\projvalue\colon E \to 
\assignments[\varset]
$  associates each event $e\in\events$ with a set of data-values for attributes in $\varset$.
\end{inparaenum}
\end{definition}

For an event log and an object $o \in \objects$, we write $\projtrace(o)$
for the tuple of events involving $o$, ordered by timestamps. Formally, $\projtrace(o) =\tup{e_1, \dots, e_n}$ such that $\{e_1, \dots, e_n\}$ is the set 
of events in $E$ with $o\in \projobj(e)$, and $\projtime(e_1) < \dots <\projtime(e_n)$.
%(by assumption these timestamps can be totally ordered.)
% \footnote{We use the assumption to avoid an addiitonal component $\projtrace$ as in ~\cite{AalstB20}.}
%
In examples, we often leave $\objects$, $\activities$ and domains
implicit and present an event log $\eventlog$ as a set of tuples 
$\tup{e,\projact(e), \projobj(e), \projtime(e),\projvalue(e)}$ 
representing events. Timestamps are shown as natural numbers, and concrete event ids as $\eid{0}, \eid{1}, \dots$.
The next example demonstrates an event log related to the process outlined in Ex.~\ref{exa:intro}.

\begin{example}
\label{exa:log}
%We consider a more sophisticated variant of the order-to-shipment example in \cite{GianolaMW24}, in turn inspired from \cite{AalstB20}. Instead of emphasizing the problems related to reshuffling products from one order to the other, which were tackled in  \cite{GianolaMW24} and are homogeneously handled in this work, we focus on distinctive features of our extended model, not covered in object-centric nets \cite{AalstB20} nor OPIDs \cite{GianolaMW24}. Process executions in this scenario involve the following events:
%\begin{inparaenum}[\it (i)]
%\item $\m{place\:order}$ creates an order indicating its id, the number of days expected by the customer for the delivery,  and the order content, consisting of one or more products;
%\item $\m{pay\:cc}$ and $\m{pay\:bt}$ pay the order by credit card and bank transfer respectively;
%\item $\m{pick\:item}$ fetches a product from the warehouse;
%\item $\m{ship}$ takes an order with its products and deals with the shipment, also declaring the shipment mode.
%\end{inparaenum}
%\NEW{The following is an event log related to the process outlined in Ex.~\ref{exa:intro}.}
Consider the set of objects $\objects=\{\co{o}_1, \co{p}_1, \co{p}_2\}$ with
$\otype(\co{o}_1)=\mathit{order}$ and $\otype(\co{p}_1) = \otype(\co{p}_2) =$ $\mathit{product}$. 
The event log $\events = \{\eid{0}, \eid{1}, \eid{2}, \eid{3}\}$ with the  events detailed below reports that order $\co{o}_1$ is placed with two products $\co{p}_1,\co{p}_2$ and $3$ days for delivery. Then $\co{o}_1$ is paid by credit card, $\co{p}_1$ is picked, and finally $\co{o}_1$ is shipped only with $\co{p}_1$, confirming the $3$ days and selecting $\m{truck}$ mode:\\[-7pt]
\begin{footnotesize}
\[
\begin{array}{l}
\tup{\eid{0}, \m{place\ order}, \{\co{o}_1,\co{p}_1,\co{p}_2\},\{d \mapsto 3\}, 1}, 
\tup{\eid{1}, \m{pay\ cc}, \{\co{o}_1\}, \emptyset, 4},\notag \\ 
\tup{\eid{2}, \m{pick\ item}, \{\co{o}_1,\co{p}_1\}, \emptyset, 5}, 
\tup{\eid{3}, \m{ship}, \{\co{o}_1,\co{p}_1\}, \{d \mapsto 3,s \mapsto \textsf{truck}\}, 9}
\end{array}
\]
\end{footnotesize}
\end{example}

The notions of object and trace graphs from \cite{AalstB20,GianolaMW24} remain identical also in our setting, but we report them here for completeness.
%Given an event log $\eventlog = \tup{\events, \objects, \projact, \projobj,\projtime}$,
The \emph{object graph} $\GG_\eventlog$ of an event log $\eventlog$ is the undirected graph with node set $\objects$, and an edge from $o$ to $o'$ if there is some event $e\in E$ such that $o\in \projobj(e)$ and
$o'\in \projobj(e)$. Thus, the object graph indicates which objects share events.
%We next define a \emph{trace graph} as the equivalent of a linear trace in our setting.%
%\footnote{Trace graphs are called \emph{process executions} in \cite{AalstB20}, we reserve this term for model runs.}
%
%\begin{definition}
%\label{def:trace:graph}
Let $X$ be a connected component in $\GG_\eventlog$. Then,
the \emph{trace graph} induced by $X$ is the directed graph $\tracenet_X = \tup{E_X, D_X}$ where:
\begin{inparaenum}
\item[(i)]
the set of nodes $E_X$ is the set of all events $e\in E$ that involve objects in $X$, i.e., such that $X \cap \projobj(e) \neq \emptyset$, and 
\item[(ii)] the set of edges $D_X$ consists of all $\tup{e,e'}$ such that for some 
$o\in \projobj(e) \cap \projobj(e')$, it is $\projtrace(o)=\tup{e_1,\dots,e_n}$
and $e=e_i$, $e'=e_{i+1}$ for some $0\,{\leq}\,i\,{<}\,n$. 
\end{inparaenum}
%
%In particular, the notion of trace graph, used to single out all the events and flows referring to an object, does not change as what matters for correlating events is that they insist on the same objects, while it is not important whether they share data values. In the example, we have a single trace graph for $o_1$.
Notably, the notion of trace graph is not modified despite the presence of data: edges only relate events that share objects, independent of possibly shared data values.

\section{Data-Aware Object-Centric Petri Nets with Identifiers}
\label{sec:nets}

We define \emph{data-aware object-centric Petri nets with identifiers} (\mynets), enriching OPIDs \cite{GianolaMW24} with complex data and full synchronization capabilities.
%data values, complex data manipulation capabilities, and full object-aware synchronization. 

As in PNIDs and OPIDs, objects can be created in \mynets using $\nu$ variables. However, in DOPIDs tokens can carry object ids, data values, or tuples combining these.
%multiple object ids and/or data values.
The latter account at once for relationships among objects, and attributes connecting objects to data values. So objects can be linked to other objects or data values, e.g.  as in \exaref{order}, where a product tracks the order it belongs to, and an order is associated to its shipment mode. Arcs are labeled with (tuples of) variables to match with objects and relations, as explained later. 
%\NEW{Besides the presence of data values in tokens, the data support of DOPIDs includes transition guards involving these; both of these features were not present in OPIDs.}
%%% shorten
%
In the style of object-centric nets~\cite{AalstB20} and synchronous proclets~\cite{Fahland19}, DOPIDs can spawn and transfer multiple objects at once, using an extension of
the mechanism in OPIDs based on special ``list variables'' that match with \emph{lists of objects}. The refinement consists in the possibility of indicating, when operating over a number of objects by consuming multiple tokens at once, whether one wants to consume \emph{some} or \emph{all} matching objects. The latter case could not be tackled in OPIDs, and is essential to cover exact synchronization.
%reconstruct the different forms of synchronization present in synchronous proclets. 

%A second, key extension to OPIDs is that \mynets support different types of data values carried by tokens, i.e., \emph{object ids} to account for the identifiers of objects and \emph{structured data values} such as integers or reals.
%By expressing guards over such data values, \mynets incorporate and go beyond the modelling features of Data Petri nets \cite{MannhardtLRA16,LeoniFM18} and related formalisms.

\smallskip
\noindent
\textbf{Formal Definition.}
Let $\Sigma = \otypes \uplus \Sigma_{val}$ be the set of base types, including object types and data-value types. 
As in colored Petri nets, each place has a \emph{color}: a cartesian product of data types from $\Sigma$. 
More precisely, the set of colors $\colors$ is the set of all $\sigma_1 \times \cdots \times \sigma_m$ such that $m \geq 1$ and $\sigma_i \in \Sigma$ for all $1\leq i\leq m$.
In addition, let the set of list types $\ltypes$ consist of all $[\sigma]$ such that $\sigma \in \otypes$.

In \mynets, tokens are tuples of object ids and data values, each associated with a color.
E.g., to model the process in \exaref{intro}, we want to use $\tup{\co{o}_1}$, $\tup{\co{o}_1, \co{p}_1}$, and $\tup{\co{o}_1, 3}$ as tokens -- respectively representing the order $\co{o}_1$, the relationship indicating that product $\co{p}_1$ is contained in order $\co{o}_1$, and the fact that $\co{o}_1$ has $3$ days of desired delivery by the customer.
In contrast to standard Petri nets which have only indistinguishable black tokens, DOPIDs keep track of object identity when firing transitions. To this end, we use a set of variables $\allvars$ in arc labels, to act as placeholders for objects or lists of objects.
The variables in $\allvars$ are assumed to be
\emph{typed} in the sense that there is a function $\vartype\colon \allvars \to \Sigma \cup \ltypes$ assigning a type to each variable.

\smallskip
\noindent
The set of variables $\allvars = \varset \uplus \varset_{list} \uplus \varset_{list}^{\subseteq} \uplus \varset_{list}^{=} \uplus \nuvarset$ is the disjoint union of: 
\begin{compactenum}
    \item a set $\varset$ of ``normal'' variables that refer to single objects or data values, denoted by lower-case letters like $v$, with a type $\vartype(v) \in \Sigma$;
    \item a set $\varset_{list}$ of list variables referring to a list of objects of the same type, denoted by upper case letters like $U$, with $\vartype(U)=\listtype{\sigma} \in \ltypes$;
    \item  two sets $\varset_{list}^{\subseteq}$ and $\varset_{list}^{=}$ that contain \emph{annotated} list variables $U^{\subseteq}$ and $U^=$ resp., for each list variable $U$ in $\varset_{list}$ -- these will be used to express whether \emph{some} or \emph{all} objects matching the variable must be considered;
   % \sarahtodo{denote vars  uppercase, $U^{\subseteq}$ and $U^=$ resp?}
    \item a set $\nuvarset$ of variables $\nu$ referring to fresh objects, 
 with $\vartype(\nu) \in \otypes$. 
\end{compactenum} 
 
We assume that infinitely many variables of each kind exist, and for every $\nu \,{\in}\,\nuvarset$, that $\Dom({\vartype(\nu)})$ is %countably 
infinite, for unbounded supply of fresh objects~\cite{RVFE10}.

To capture relationships between objects in consumed and produced tokens when firing transitions,  we need
%as customary, 
arc \emph{inscriptions}, which are tuples of variables.
%in $\allvars$.

\begin{definition}[Inscription]
\label{def:inscription}
An \emph{inscription} is a tuple $\vec v = \tup{v_1, \dots, v_m}$
such that $m \geq 1$ and $v_i\in \allvars$ for all $i$, but at most one $v_i \in \varset_{list}\uplus \varset_{list}^{\subseteq} \uplus \varset_{list}^{=}$ for $1 \leq i \leq m$.
We call $\vec v$ a 
\emph{transfer-template inscription} if $v_i\in \listvarset$, %for some $i$,
\emph{$\subseteq$-template inscription} if $v_i\in \listvarset^{\subseteq}$, %for some $i$, 
or \emph{$=$-template inscription} if $v_i\in \listvarset^{=}$ for some $i$, and a
\emph{simple inscription} otherwise.
\end{definition}

For instance, for $o,p\in\varset$ and $P,Q\in \listvarset$, 
$\tup{o,P}$ is a transfer inscription and $\tup{p}$ a simple one. The inscription $\tup{o,P^=}$ is a $=$-template inscription as it contains the variable $P^=$ in $\listvarset^{=}$, but $\tup{P,P}$ and $\tup{P,Q}$ are not valid inscriptions as they have two list variables.
Note that a variable like $P$ is only a placeholder for a list, it will be instantiated during execution by a concrete list, e.g. $[\co{p}_1, \co{p}_2,\co{p}_3]$. 
By allowing at most one list variable in inscriptions, we restrict to many-to-one relationships between objects. However, recall that many-to-many relationships can be modeled as many-to-one with auxiliary objects, through reification.

Template inscriptions will 
be used to capture an arbitrary number of tokens of the same color: e.g.,
if $o$ is of type \emph{order} and $P$ of type $\listtype{product}$, then
$\tup{o,P}$ refers to a single order with an arbitrary number of products. % ids (i.e., other object ids). 
As we will see later, simple, $\subseteq$- and $=$-template inscriptions will be used when consuming tokens, while simple and transfer-template inscriptions will be employed when producing tokens. Specifically, when consuming tokens e.g. carrying order-product pairs from a place $q$, $\tup{o,P^\subseteq}$ selects \emph{some} tokens 
from $q$,
%carrying different products for a given order $o$, 
while $\tup{o,P^=}$ selects \emph{all} of them. 
This is essential to model subset and exact synchronization (cf.~Section~\ref{sec:related}). 

We define the \emph{color} of an inscription $\iota =\tup{v_1, \dots, v_m}$ as the tuple of the types of the involved variables, i.e., $\coloring(\iota) =\tup{\sigma_1, \dots, \sigma_m}$ where $\sigma_i=\vartype(v_i)$ if $v_i\in \varset\cup \nuvarset$, and $\sigma_i=\sigma'$ if $v_i$ is a list variable of type $\listtype{\sigma'}$.
Moreover, we set  $\vars{\iota} = \{v_1, \dots, v_m\}$.
E.g. for $\iota=\tup{o,P}$ with $o$, $P$ as above, we have $\coloring(\iota) =$ $\tup{\mathit{order}, \mathit{product}}$ and $\vars{\iota}=\{o,P\}$. 
%The same applies to $\subseteq$- and $=$-template inscriptions.
The set of all inscriptions is denoted $\Omega$.

To 
define guards on transitions, we consider the following definition of \emph{constraints}, where we assume that uninterpreted functions and relations are defined over $\Sigma$ (i.e., all object id and data value domains):

\begin{definition}[Constraints]
\label{def:constraints}
For a set of variables $\varset$ with list variables $\listvarset \subseteq \varset$, a 
\emph{constraint} $\constraint$ and \emph{expressions}
$s$, $n$, $r$, $d$, $k$, $t_D$, and $t_K$ are defined as follows:
 \begin{align*}
 c &::= v_b \mid b  \mid d = d \mid k \geq k \mid k > k\mid  R(d,\dots,d) \mid R(k,\dots, k)\mid c \wedge c \mid \neg c \\
%   \end{align*}
%   \vspace{-2.5em}
%   \begin{align*}
 n &::= v_n \mid z \mid sum(Z) \mid min(Z) \mid max(Z) \mid n + n \mid - n \\
%   \end{align*}
%    \vspace{-2.5em}
%  \begin{align*}
 r &::= v_r \mid q \mid sum(Q) \mid min(Q) \mid max(Q) \mid mean(Q)  \mid r + r \mid - r \\ 
%  \end{align*}
%   \vspace{-2.5em}
%   \begin{align*}
   s &::=\parbox{5cm}{$v_s \mid h  \mid f(s,\dots,s)$}
 d ::=   s  \mid 
     f_w(d,\dots,d) \mid f_w(k,\dots, k) \\
%  \end{align*}
%  \vspace{-2.5em}
%   \begin{align*}
   k &::= n  \mid r \mid g_w(k,\dots,k)\mid g_w(d,\dots,d)\mid k+k\mid -k\mid sum(t_K) \\
%   \end{align*}
%   \vspace{-2.5em}
%      \begin{align*}
 t_D & ::=\parbox{4.8cm}{$D \mid f_y(t_D)\mid  f_y(t_K)$}
 t_K ::= Z \mid  Q \mid g_y(t_K) \mid g_y(t_D)
\end{align*}
\noindent where $v_b, v_s, v_n, v_r \in \varset$,
$\vartype(v_b)=\bool$, $b \inn \mathbb B$,
 %$\vartype(v_s)=\str$, $t \inn \mathbb S$,
 $\vartype(v_n)=\integer$, $z \inn \mathbb Z$,
$\vartype(v_r)=\rational$, $q \inn \mathbb Q$, 
$\vartype(v_s)=\fset_i$, $h \inn {\mathbb D}_i$ (some $i$), $Z,Q,D \in\varset_{list}$, $\vartype(Z)=[\integer]$, $\vartype(Q)=[\rational]$, $D$ has non-arithmetic type, $f_{w}, f_{y}$ are functions with arithmetic codomains,  $g_w$, $g_y$ are functions with non-arithmetic ones.
\end{definition}

This definition may seem quite involved, but it captures essentially simple concepts. 
Term $s$ defines a string as a variable, constant, or inductive function application. For expressions of integer type, term $n$ allows variables, integers, or aggregators \emph{sum}, \emph{min}, and \emph{max} applied to lists of integers. Term $r$ is analogous to $n$ but for rationals, for which also the aggregator \emph{mean} is defined. Terms $k$ and $d$ define, by mutual induction, mixed terms that can combine different types: the only difference is that the root symbol for $k$ lives in a arithmetical domain ($\mathbb Z$ or $\mathbb Q$), wheres for $d$ it lives in a non-arithmetical domain. An analogous 
mutual
%double 
induction defines the list terms $t_D$ and $t_K$, which are built from list variables and functions, but differ in the fact that $t_K$ lives in a arithmetical domain.  Here a function applied to a list term is applied component-wise, and returns another list.
Notice also that a term $k$ can be produced by applying aggregator \emph{sum} to a list variable $t_k$.
Standard equivalences apply, hence disjunction (i.e., $\lor$) of constraints can be used, as well as comparisons $=$, $\neq$, $<$, $\leq$ on integer and rational expressions.
The set  of variables in a constraint $\varphi$ is denoted $\varsin{\varphi}$, and the set of all constraints over variables $\allvars$ by $\mathcal C(\allvars)$.

\begin{example}
\label{exa:conditions}
We consider two constraints that will express transition guards for our running example. % in \exaref{order}. 
First, let $d$ be an integer variable representing the maximum number of days expected for delivery by a customer, and $m$ a string variable denoting the shipment mode of an order. Constraint $(d \leq 5 \land m = \textsf{car}) \lor (d > 5 \land m = \textsf{truck})$ expresses that either $d$ is at most 5 days and the shipment mode is \textsf{car}, or that $d$ is 6 days or more and the shipment mode is \textsf{truck}.
Second, consider a list variable $P$ for products and a unary function $\mathit{cost}$ that returns the cost of each product, a rational number. This expresses the background knowledge that every product has a cost, that is however not explicitly manipulated by the process (so it will not appear in the log). Consistently with Def.~\ref{def:constraints}, $\mathit{cost}(P)$ represents the list %of rational numbers 
that contains the costs of all elements in $P$,
%where the i-th element of the list is the cost of the i-th product from $P$. 
%Then, constraint 
and
$sum(cost(P)) \leq 1000$ expresses that the overall cost of all products in $P$ does not exceed 1000\texteuro.
\end{example}

\begin{definition}[DOPID]
\label{def:OCInet}
A  \emph{data-aware object-centric Petri net with identifiers} (\mynet) %$\values$-typed
is a tuple  
$\NN = (\otypes, \Sigma_{val}, \places, \transitions, \inflow, \outflow, 
\coloring,\ell,\guass)$,
where:
\begin{compactenum}
\item $\places$ and $\transitions$ are finite sets of places and transitions such that $P\cap T=\emptyset$;
\item $\coloring\colon \places \rightarrow \colors$ maps every place to a color over $\Sigma$; %$$\otypes$ and $\Sigma_{val}$;
\item $\ell \colon T \to \activities \cup \{\tau\}$ is the transition labelling where $\tau$ marks an invisible activity,
\item $\inflow \colon \places \times \transitions \rightarrow \Omega$ is a partial function called \emph{input flow} that satisfies $\coloring(\inflow(p,t))=\coloring(p)$
for every $(p,t)\in \dom(\inflow)$;
\item $\outflow\colon \transitions \times \places \rightarrow \Omega$ is a partial function called \emph{output flow} that satisfies $\coloring(\outflow(t,p))=\coloring(p)$ for every $(t,p)\in \dom(\outflow)$;
\item $\guass\colon  \transitions \rightarrow \mathcal C(\allvars)$ is a partial functions assigning guards, such that for every $t\in\transitions$ and $\guass(t)=\varphi$, $\varsin{\varphi}\subseteq\invars{t} \cup \outvars{t}$,
%\cup \nuvarset$, 
where $\invars{t}=\cup_{p\in\places}\varsin{\inflow(p,t)}$ and $\outvars{t}=\cup_{p\in\places}\varsin{\outflow(t,p)}$. 
\end{compactenum}
As a well-formedness condition, we assume that in $F_{in}$ one can only use only simple, $\subseteq$-template and  $=$-template inscriptions, while in $F_{out}$ one can only use simple and transfer-template inscriptions (cf.~Def.~\ref{def:inscription}).
\end{definition}

For a \mynet $\NN$ as in \defref{OCInet}, we also use the common notations for presets $\pre t = \{p \mid (p,t)\in \dom(\inflow)\}$ and postsets $\post t = \{p \mid (t,p)\in \dom(\outflow)\}$.

Simple flows (i.e., flows with simple inscriptions) are meant to consume and produce %(tuples of) 
single tokens, whereas template flows (i.e., flows with template inscriptions) to consume and produce multiple matching tokens. Consumption in this case can be fine-tuned by indicating whether some (in the case of a $\subseteq$-template inscription) or all (for a $=$-template inscription) matching tokens have to be consumed. Production transfers such matched tokens to the corresponding output places, using transfer-template inscriptions.    
As illustrated by the next example and clarified by the definition of the semantics of \mynets, this is used to capture variable arcs in \cite{LissAA23}, but also to reconstruct different forms of synchronization. 
In particular, $=$-template inscriptions realize a form of data-aware wholeplace operations: they do not consume all tokens contained in a place, but all those that match the inscription.  
%\todo{this is not what is in def}
%
%
Ex.~\ref{exa:order} illustrates the most important features of \mynets;
it would not be expressible in
%and  how they exceed by far the modelling capabilities of 
existing object-centric  formalisms. 

\begin{example}
\label{exa:order}
\figref{OPID:package:handling} graphically depicts a \mynet for the simple yet sophisticated order-to-ship process informally described in Ex.~\ref{exa:intro}.
%extending the running example from \cite{GianolaMW24}. 
%
Variables $\nu_o$ of type \emph{order} and $\nu_p$ of type \emph{product}, both in $\nuvarset$, refer to new orders and products. Normal variables $o,p \in \varset$ of type \emph{order} and \emph{product} refer to existing orders and products, and variable $P$ of type $\listtype{\mathit{product}}$ to lists of products. We also use the data value variables $d$ and $m$ described in Ex.~\ref{exa:conditions}.
%, resp. of type integer and string, to represent the delivery days desired by the customer, and the shipment mode. 
For readability, single-component tuples  are written without brackets (e.g., we write $o$ instead of $\tup{o}$).

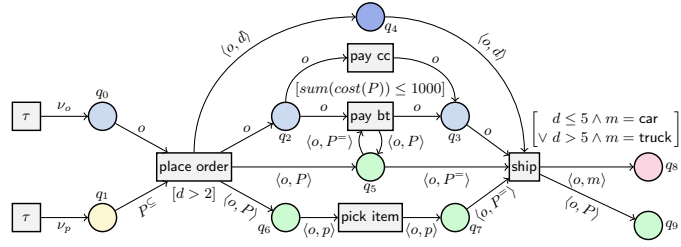
\begin{figure}[t]
\centering
\resizebox{.75\textwidth}{!}{
\begin{tikzpicture}[node distance=5mm and 8mm]

\node[oplace] (o0) {};
\node[below= of o0] (mid1) {};
\node[trans, left=of o0] (geno) {$\tau$};
 \draw[arc] (geno) -- node[above, insc] {$\nu_o$} (o0);

 \node[iplace, below=of mid1, label] (i0) {};
 \node[trans, left=of i0] (geni) {$\tau$};

 \draw[arc] (geni) -- node[below, insc] {$\nu_p$} (i0);

\node[trans,right=of mid1] (po) {$\m{place\ order}$};
\node[below=0mm of po,insc] {$[d>2]$};

 \draw[arc] (o0) -- node[above, insc] {$o$} (po);
 \draw[arc] (i0) -- node[below, insc, sloped] {$P^\subseteq$} (po);

\node[right=of po] (mid2) {};

\node[oplace,above=of mid2] (o1) {};
 \draw[arc] (po) -- node[above, insc] {$o$} (o1);

\node[trans,right=of o1] (ot1) {$\m{pay\ bt}$};
\node[above=0mm of ot1,insc] {$[sum(cost(P))\leq1000]$};

\node[trans,above=of ot1] (ot2) {$\m{pay\ cc}$};

 \draw[arc,out=90,in=180] (o1) edge node[above, insc] {$o$} (ot2);

 \draw[arc] (o1) -- node[above, insc] {$o$} (ot1);

\node[oiplace, at=(ot1|-po)] (corset) {};
 \draw[arc] (po) -- node[below, insc] {$\tup{o,P}$} (corset);

 \draw[arc,out=120,in=-120] (corset) edge node[left, insc] {$\tup{o,P^=}$} (ot1);
\draw[arc,out=-60,in=60] (ot1) edge node[right, insc] {$\tup{o,P}$} (corset);

 \node[oplace,right=of ot1] (o2) {};
 \draw[arc] (ot1) edge node[above, insc] {$o$} (o2);

 \draw[arc,out=0,in=90] (ot2) edge node[above, insc] {$o$} (o2);

  \node[oiplace,below=of mid2] (i1) {};
 \draw[arc] (po) -- node[below, insc, sloped] {$\tup{o,P}$} (i1);

% \node[oplace,right of=o2, xshift=10mm] (o3) {};

 \node[trans,at=(i1-|corset)] (it1) {$\m{pick\ item}$};
 \draw[arc] (i1) -- node[below, insc] {$\tup{o,p}$} (it1);

 \node[oiplace,at=(it1-|o2)] (i2) {};
 \draw[arc] (it1) -- node[below, insc] {$\tup{o,p}$} (i2);

\node[odplace,above=2mm of ot2] (od) {};
 \draw[arc,out=90,in=180] (po) edge node[above, insc,sloped] {$\tup{o,d}$} (od);

\node[below=of o2] (mid3) {};
 
 \node[trans,right=of mid3] (ship) {$\m{ship}$};
\node[above=0mm of ship,insc,anchor=south west] {
$\left[\begin{array}{@{}rl} & d \leq 5 \land m = \textsf{car}\\ {} \lor & d > 5 \land m = \textsf{truck}\end{array}\right]$};

 \draw[arc,out=0,in=90] (od) edge node[above, insc,sloped] {$\tup{o,d}$} (ship);

\draw[arc] (o2) -- node[above,insc] {$o$} (ship);

\draw[arc] (corset) -- node[below,insc] {$\tup{o,P^=}$} (ship);

\draw[arc] (i2) -- node[below,insc,sloped] {$\tup{o,P^=}$} (ship);

\node[osplace,right=of ship,xshift=10mm] (os) {};
\draw[arc] (ship) -- node[below,insc] {$\tup{o,m}$} (os);

\node[oiplace,below=of os] (oi) {};
\draw[arc] (ship) -- node[below,insc,sloped] {$\tup{o,P}$} (oi);

\node[above=-.5mm of o0,insc] {$q_0$};
\node[above=-.5mm of i0,insc] {$q_1$};
\node[below=-.5mm of o1,insc] {$q_2$};
\node[below=-.5mm of o2,insc] {$q_3$};
\node[right=-1mm of od,yshift=-2mm,insc] {$q_4$};
\node[below=-.5mm of corset,insc] {$q_5$};
\node[left=-1mm of i1,yshift=-2mm,insc] {$q_6$};
\node[right=-1mm of i2,yshift=-2mm,insc] {$q_7$};
\node[right=-.5mm of os,insc] {$q_8$};
\node[right=-.5mm of oi,insc] {$q_9$};

\end{tikzpicture}
}
\caption{\mynet of an order-to-ship process.\label{fig:OPID:package:handling}}
\end{figure}

We explain the model transition by transition. The two silent transitions on the left have the purpose of injecting fresh orders and products in the net. 
Transition \textsf{place\:order} takes an order $o$ from place $q_0$ and \emph{some} available products $P$ from place $q_1$
%, and assigns all those products 
that are assigned
to the order. 
%This is done through 
Here the output transfer-template inscription $\tup{o,P}$,  transfers to place $q_5$ $|P|$ tokens, each carrying a pair $\tup{o,p}$ with $p$ taken from $P$. In this respect, place $q_5$ explicitly represents what in proclets is called \emph{correlation set}, describing for every  order in the system, which products belong to it. After firing \textsf{place\:order}, also $q_6$ contains products of $o$, but from there single products will be consumed  
%concurrently and 
independently through \textsf{pick\:item}. Besides its products, \textsf{place\:order} assigns to the order $o$ also the maximum number of days of delivery $d$, inserting the pair $\tup{o,d}$ in place $q_4$, responsible for tracking this attribute for every active order. Since $d$ is only used in output flows, this reconstructs what in DPNs is called a ``write'' variable, which is moreover constrained by condition $d > 2$, capturing that $d$ can only take values above 2 days. Finally, \textsf{place\:order} changes the state of the picked order $o$, moving it from place $q_0$ to $q_2$.

From $q_2$, two transitions can  be fired for $o$, reflecting two 
payment modes. Specifically, order $o$ either flows through \textsf{pay\:cc}, or through \textsf{pay\:bt}, but the latter can only be selected if the overall cost of $o$ 
%(i.e., the sum of the costs of all its products) 
does not exceed 1000{\texteuro} (cf.~\exaref{conditions}). To obtain all products of $o$, \textsf{pay\:bt} needs to fetch those products from place $q_5$, using inscription $\tup{o,P^=}$. This inscription requires that the first component matches order $o$ consumed from place $q_2$, while $P^=$ forces all matching pairs for $o$ to be included. As the aim is to use the products, but not to remove the corresponding pairs, they are all transferred back to $q_5$ using the 
inscription $\tup{o,P}$. 

Concurrently with order payment, the state of single products (recalling their order) is changed when they are, one by one, picked via the \textsf{pick\:item} transition. 

Finally, the \textsf{ship} transition is enabled for a paid order $o$ under the following conditions:
First, \emph{all} its products must have been picked. This is expressed through the two $=$-template input flows with the same inscription $\tup{o,P^=}$, which has the effect of consuming all pairs containing products of $o$ from places $q_5$ and $q_7$. The output transfer-template inscription $\tup{o,P}$ 
%connecting \textsf{ship} 
to place $q_9$ has the effect of transferring all those pairs there. At the same time, \textsf{ship} considers the value $d$ for $o$, and through the attached constraint (cf.~\exaref{conditions}) determines the shipment mode for $o$, which is recorded in place $q_8$, linking $m$ to $o$ using inscription $\tup{o,m}$. 
\end{example}

Two important remarks are in place wrt. \exaref{order}.
First, the modelling pattern in \figref{OPID:package:handling} that employs the  ``correlation'' place $q_5$ to keep track of products contained in an order, paired with the two $=$-template inscriptions in shipment to ensure that \emph{all} products of an order have been actually picked, is what makes \mynets able to support exact synchronization in the full generality of synchronous proclets \cite{Fahland19}, something that was out of reach until now for formal models based on PNIDs.
Second, as \mynets handle multiple objects and data values at once, they are not only able to express read-write conditions and guards as in DPNs \cite{LeoniFM18}, but also more sophisticated conditions using aggregation expressions.

\smallskip
\noindent
\textbf{Semantics.}
Given the set of object ids $\objects$ and a set of data-values $\mathcal {DV}$,
the set of \emph{tokens} $\objtuples$ is the set of tuples that consist of object ids and data values $\objtuples\,{=}\,\{(\objects\uplus\mathcal {DV})^m\,{\mid}\,m\,{\geq}\,1\}$.
The \emph{color} of a token $\omega\in \objtuples$ of the form 
%$\omega=\tup{o_1, \dots, o_j,dv_{j+1},\dots, dv_m}$ 
$\omega=\tup{d_1, \dots, d_m}$ 
is given by
%$\coloring(\omega)=\tup{\vartype(o_1), \dots, \vartype(o_j),\vartype(dv_{j+1}),\dots, \vartype(dv_m)}$.
$\coloring(\omega)=\tup{\vartype(d_1), \dots, \vartype(d_m)}$.
To define the execution semantics, we first introduce a notion of
a \emph{marking} of a \mynet $\NN=\tup{\otypes, \Sigma_{val},\places, \transitions, \inflow, \outflow, \coloring,\ell,\guass}$, namely as a function $\marking\colon\places\rightarrow 2^{\objtuples}$, such that for all $p\in P$ and $\tup{d_1, \dots, d_m} \in \marking(p)$, it holds that $\coloring(\tup{d_1, \dots, d_m}) = \coloring(p)$. 
Let $\mathit{Lists}(\objects)$ denote the set of object lists of the form $[o_1, \dots, o_k]$ with $o_1,\dots, o_k \in \objects$ such that all $o_i$ have the same object type;
the type of such a list is then $\listtype{\vartype(o_1)}$. Analogously, given $\mathit{Lists}(\mathcal{DV})$ the set of data-value lists  $[dv_1, \dots, dv_l]$ with $dv_1,\dots, dv_l \in \mathcal{DV}$ such that all $dv_i$ have the same data-value type,
the type of such a list is  $\listtype{\vartype(dv_1)}$.
%
%\todo{do we really need lists of data values (as opposed to lists of objects)?}
Next, we define \emph{bindings} to fix which data are involved in a transition firing.

\begin{definition}[Binding]
A \emph{binding} for a transition $t$ and a marking $\marking$ is a type-preserving function 
$b\colon \invars{t} \cup \outvars{t} \to (\objects \cup \mathit{Lists}(\objects))\uplus (\mathcal{DV}\cup \mathit{Lists}(\mathcal{DV}))$,
such that for all $U \in \listvarset$, we have $b(U)=b(U^=)=b(U^\subseteq)$.
To ensure freshness of created values, we demand that $b$ is injective on $\nuvarset \cap \outvars{t}$, and that $b(\nu)$ does not occur in $\marking$ for all $\nu \in \nuvarset \cap \outvars{t}$.
\end{definition}

E.g., for transition \textsf{ship} in \exaref{order} the mapping $b$ that sets
$b(o) = \co{o}_1$ and $b(P)=[\co{p}_1,\co{p}_2,\co{p}_3]$ is a binding.
Next, we extend bindings to inscriptions to fix which tokens 
%(not just single objects) 
participate in a transition firing.
The extension of a binding $b$ to inscriptions, i.e., variable tuples, is denoted $\vec b$.
For an inscription $\iota\,{=}\,\tup{v_1, \dots, v_m}$
and binding $b$ such that $o_i\,{=}\,b(v_i)$ for all $1\,{\leq}\,i\,{\leq}\,m$,
let $\vec b(\iota)$ be the set of object tuples defined as follows:
if $\iota$ is a simple inscription then
$\vec b(\iota) = \{\tup{o_1, \dots, o_m}\}$.
Otherwise, there must be one $v_i$, $1\,{\leq}\,i\,{\leq}\,n$, such that $v_i \in \listvarset$, and consequently $o_i$ must be a list, say $o_i=[u_1, \dots, u_k]$ for some $u_1, \dots, u_k$.
Then $\vec b(\iota) = \{\tup{o_1, \dots, o_{i-1}, u_1, o_{i+1},\dots, o_m}, \dots, \tup{o_1, \dots, o_{i-1}, u_k, o_{i+1},\dots, o_m}\}$.
The set of all bindings is denoted by $\mathcal B$.
%We define that a transition with a binding $b$ is enabled if all object tuples pointed by $\vec b$ occur in the current marking. 
We next define when a transition together with a binding is enabled in a marking.
%Given a condition $\phi$, $\vec b(\phi)$ means that $b$ is applied to all variables occurring in $\phi$.

\begin{definition}[Enablement]
\label{def:enabled}
A transition $t \in\transitions$ and a binding $b$ for marking $\marking$ are \emph{enabled} in $\marking$ if $b(guard(t))$ is satisfiable, for all $p \in \pre{t}$,
$\vec b(\inflow(p,t)) \subseteq \marking(p)$ and
%\begin{inparaenum}
%\item[(i)] 
if $\inflow(p,t)$ is a $=$-variable flow with list variable $V^=$ there is no binding $b'$ that differs from $b$ only wrt. $V^=$ s.t.
$\vec b'(\inflow(p,t)) \subseteq \marking(p)$ and
$b(V^=) \subset b'(V^=)$.
%\item[(i)] $\vec b(\inflow(p,t)) \subseteq \marking(p)$ and there is no binding $b'$ that differs from $b$ only wrt list variables such that, for some list variable $V^=$, $b(V^=) \subset b'(V^=)$, if $\inflow(p,t)$ is a $=$-variable flow;
%\item[(ii)] $\vec b(\inflow(p,t)) \subseteq \marking(p)$, otherwise.
%\end{inparaenum}
%\begin{inparaenum}[\it (1)]
%\item $b(guard(t))$ is true;
%    \item  $\vec b(\inflow(p,t)) \subseteq \marking(p)$ for all $p \in \pre{t}$, such that $\inflow(p,t)$ is a (plain) variable flow, a $\subseteq$-variable or a non-variable flow; 
%    \item $\vec b(\inflow(p,t)) = \marking(p)$ for all $p \in \pre{t}$ such that $\inflow(p,t)$ is a $=$-variable flow.
%\end{inparaenum}
\end{definition}
 
E. g., the binding $b$ with $b(o) = \co{o}_1$ and $b(P)=[\co{p}_1,\co{p}_2,\co{p}_3]$ is enabled in marking $\marking$ of the net in \exaref{order} with $\tup{\co{o}_1}\in \marking(q_{\mathit{blue}})$ and $\tup{\co{o}_1,\co{p}_1}, \tup{\co{o}_1,\co{p}_2}, \tup{\co{o}_1,\co{p}_3}\in \marking(q_{\mathit{green}})$, for $q_{\mathit{blue}}$ and $q_{\mathit{green}}$ the input places of \textsf{ship} with respective color.
 
\begin{definition}[Firing]
Let transition $t$ and binding $b$ be enabled in marking $\marking$.
The \emph{firing} of $t$ with $b$ yields the marking $\marking'$ given by $\marking'(p)=\marking(p) \setminus \vec b(\inflow(p,t)) $ for all $p \in {\pre t} \setminus {\post t}$,
$\marking'(p)=\marking(p) \cup \vec b(\outflow(p,t))$ for all $p \in {\post t}\setminus {\pre t}$, and $\marking'(p)=\marking(p)$ for all $p \in {\post t}\cap {\pre t}$.
\end{definition}

We write $\marking \goto{t,b} \marking'$ to denote that $t$ is enabled with binding $b$ in $\marking$, and its firing yields $\marking'$.
A sequence of transitions with bindings 
$\run = \tup{(t_1, b_1), \dots, (t_n, b_n)}$ is called a \emph{run} 
if $\marking_{i-1} \goto{t_i, b_i} \marking_i$ for all $1\leq i \leq n$,
in which case we write $\marking_0 \goto{\run} \marking_n$.
For such a binding sequence $\run$, the \emph{visible subsequence} $\run_v$ is the subsequence of $\run$ consisting of all $(t_i, b_i)$ such that $\ell(t_i) \neq \tau$.

An \emph{accepting} object-centric Petri net with identifiers is an object-centric Petri net $\NN$
together with a set of initial markings $M_{\mathit{init}}$ and a set of final markings  $M_{\mathit{final}}$.
For instance, for \exaref{order}, $M_{\mathit{init}}$ consists only of the empty marking, whereas $M_{\mathit{final}}$ consists of all (infinitely many) markings in which each of the two right-most places has at least one token, and all other places have no token.
The \emph{language} of the net is given by
$\LL(\NN) = \{\run_v \mid m \goto{\run} m',\ m \in M_{\mathit{init}}\text{, and } m' \in M_{\mathit{final}}\}$,
i.e., the set of visible subsequences of accepted sequences.

The next example relates an observed event log with a \mynet, preluding to the conformance checking problem tackled in the next section.
\begin{example}
The event log described in \exaref{log} cannot be suitably replayed in the \mynet $\NN$ of \figref{OPID:package:handling}, due to two mismatches: according to $\NN$, $\co{o}_1$ must be shipped by \textsf{car} (as the preferred days are below 5), and with both products $\co{p}_1$ and $\co{p}_2$. This in turn requires that, before shipping, also product $\co{p}_1$ must be picked.
\end{example}

%\vspace{-1mm}
\smallskip
\noindent
\textbf{Modelling considerations.}
%We close this section by providing some key modelling considerations of \mynets.
%and their relationships to state-of-the-art (Petri net-based) formalisms for data-aware processes, and more general modelling approaches. 
We briefly relate \mynets to the two reference formalisms that infuse Petri nets with case attributes (namely DPNs \cite{LeoniFM18}) and multiple objects with complex synchronization mechanisms (namely sychronous proclets \cite{Fahland19}). DPNs can be expressed as \mynets using an approach similar to the encoding of DPNs into Colored Petri nets in \cite{LeoniFM18}, 
%(though here without doing any transformation on the data conditions), 
using a ``data place'' for each variable $x$ that contains a single token carrying the current value of $x$, and is linked to all transitions that read or write $x$. %Transition guards are natively supported.
%First, the DPN gets encoded into a corresponding Petri net (operating with control-flow, black tokens). Second, the so-obtained Petri net is enriched with ``data places'', one per case attribute, each containing a single token carrying the current value of the attribute. Third, every transition equipped with a guard with some case variable gets connected to the corresponding data place, fetching the current value and returning the new value according to the guard. 

Proclets are structurally encoded into \mynets following a  schema similar to the one described for OPIDs \cite{GianolaMW24}. Since \mynets provide full support for subset and exact synchronization, the crux is to refine the approach in \cite{GianolaMW24} to 
%mirror the setting of a 
reflect
correlation sets, and their consequent usage for synchronization. This is done as follows: for every correlation set linking multiple child objects of the many side to the single parent object of the one side, a special ``correlation'' place holding the  pairs is introduced. Upon synchronization, this correlation place is inspected to extract some or all the required pairs. This  reconstructs and generalizes proclet synchronization, as one can now operate over the correlation place to define different regeneration strategies for the correlation set.
%(such as, e.g., allowing one to append items to the same order in different batches, something that is not supported in proclets). 

As for more general modelling languages, we leave as future work to provide a systematic formalization into \mynets. This appears to be a  feasible route, building on previous encodings of artifact-centric and case-handling approaches into (extensions of) Petri nets \cite{Lohmann,Weske}.

%\vspace{-3mm}
\section{Alignment-Based Conformance Checking for \mynets} 

We follow alignment-based approaches for object-centric processes \cite{LissAA23,GianolaMW24}, 
which  relate trace graphs to model runs to find deviations.
In the sequel, we consider a trace graph $\tracenet_X$ and an accepting \mynet $\NN$, 
assuming that the language of $\NN$ is not empty.
% We define moves as in~\cite{LissAA23,GianolaMW24}, with the  difference that they also contain assignments of data values.
In our data-aware setting, moves also contain assignments:

\begin{definition}[Moves]
\label{def:moves}
A  \emph{model} move is a tuple in
$\{\SKIP\} \times ((\activities \cup \{\tau\}) \times\mathcal P(\objects) \times \assignments)$,
a \emph{log move} a tuple in
$(\activities \times\mathcal P(\objects) \times \assignments) \times \{\SKIP\}$, and 
a \emph{synchronous} move is of the form
$\tup{\tup{a,O_M,\alpha_M},\tup{a',O_L,\alpha_L}} \in (\activities \times\mathcal P(\objects) \times \assignments)^2 $ such that $a=a'$ and $O_L = O_M$.
The set of all synchronous, model, and log moves over $T_X$ and $\NN$ is denoted $\moves(\tracenet_X, \NN)$.
\end{definition}

In the object-centric setting, an alignment is a \emph{graph} of moves $G$. 
We use the notions of log projection $G\restrlog$ and model projection $G\restrmod$  as defined in~\cite{LissAA23,GianolaMW24}, 
but provide an intuitive explanation here:
%The following notions of log and model projections are exactly as in~\cite{LissAA23,GianolaMW24}.
%Here we write $\tup{q_0,r_0} \skippath{\log} \tup{q_k,r_k}$  if there
%is a path $\tup{q_0,r_0} \to \tup{\SKIP,r_1} \to \dots \tup{\SKIP,r_{k-1}} \to \tup{q_k,r_k}$, for $q_0, q_k \neq \SKIP$  and $k>0$, i.e., a path where all intermediate log components are $\SKIP$.
%Similarly, $\tup{q_0,r_0} \skippath{\mod} \tup{q_k,r_k}$ abbreviates
% a path $\tup{q_0,r_0} \to \tup{q_1,\SKIP} \to \dots \tup{q_{k-1},\SKIP} \to \tup{q_k,r_k}$, for $r_0, r_k \neq \SKIP$ and $k>0$.
%
%\begin{definition}[Projections]
%Let $G{=}\tup{C,B}$ be a
%graph with $C{\subseteq}\moves(\tracenet_X, N)$.
%
%The \emph{log projection} $G\restrlog =\tup{C_l,B_l}$ is the graph with node set $C_l = \{ q \mid \tup{q,r} \in C\text{ and }q \neq \SKIP\}$, and an edge $\tup{q, q'}$ iff $\tup{q,r} \skippath{\log} \tup{q',r'}$ for some $r,r'$.
%
%The \emph{model projection} $G\restrmod = \tup{C_m,B_m}$ is the graph with node set $C_m = \{ r \mid \tup{q,r} \in C\text{, }r \neq \SKIP\}$, and an edge $\tup{r, r'}$ iff $\tup{q,r} \skippath{\mod} \tup{q',r'}$ for some $q,q'$.
%\end{definition}
%
The log projection is the graph obtained from $G$ by projecting nodes to their log part, while omitting model moves, i.e. nodes where the log component is $\SKIP$.
Edges are as in $G$, except that edges are added that ``shortcut'' over model moves in $G$. The model projection is defined similarly.
%
% The log projection is a graph that restricts to the log component of moves, omitting skip symbols. Edges are as in $G$, except that one adds edges that ``shortcut'' over model moves, i.e. where the log component is $\SKIP$; the model projection is analogous for the other component. 
% For formal definitions cf. \cite{LissAA23,GianolaMW24}.
Next we define an alignment as a graph over moves where the log and model projections are a trace graph and a run, respectively.

\begin{definition}[Alignment]
An \emph{alignment} of a trace graph $\tracenet_X$ and an accepting \mynet $\NN$ is an acyclic directed graph $\Gamma=\tup{C,B}$ with $C \subseteq \moves(\tracenet_X, \NN)$
such that $\Gamma\restrlog=\tracenet_X$, 
there is a run $\run = \tup{\tup{t_1, b_1}, \dots, \tup{t_n,b_n}}$ with $\run_v \in \LL(\NN)$, and the model projection $\Gamma\restrmod = \tup{C_m,B_m}$ admits a bijection $f\colon \{\tup{t_1, b_1}, \dots, \tup{t_n,b_n}\} \to C_m$ such that
\begin{compactitem}[$\bullet$]
\item 
if $f(t_i, b_i) = \tup{a, O_M, \alpha_M}$ then $\ell(t_i)=a$, $O_M= \mathit{range}(b_i) \cap \objects$, and 
$\alpha_M = \{ x \mapsto d \mid x\in \Dom(b_i)\text{, }b_i(x)=d\text{, and }\vartype(x) \in \Sigma_{val} \}$;
%\alpha_M=b_i|_{\varset_{val}}$ for all $1\leq i \leq n$, where $\varset_{val}$ is the subset of $\varset$ with a sort in $\Sigma_{val}$;
\item
for all $\tup{r,r'}\in B_m$ there are $1{\leq}i{<}j{\leq}n$ such that $f(t_i,b_i){=}r$ and $f(t_j,b_j){=}r'$.
% TODO check this!!!
%\todo{drop "conversely" clause?}
%and conversely, if $r,r'\in C_m$ with $r=f(t_i,b_i)$ and $r'=f(t_{i+1},b_{i+1})$ for some $1\leq i < n$ then $\tup{r,r'} \in B_m$.
\end{compactitem}
\end{definition}

\begin{example}
\label{exa:alignment}
Below is an alignment $\Gamma$ for the log in \exaref{log} wrt.~$\NN$ in \exaref{order}. The log (resp. model) component is shown on top (resp. bottom) of moves.\\
\resizebox{\textwidth}{!}{
\begin{tikzpicture}[xscale=1.4, yscale=.7]
\begin{scope}
\node[trans, model, anchor=north,minimum width=11mm,splitme] (co1) at (-2,1.6) {$\tau$\nodepart{two}$\{\co{o}_1\}$};
\node[trans, log, anchor=south,minimum width=11mm] (co1s) at (-2,1.6) {$\SKIP$};
\node[trans, model, anchor=north,minimum width=11mm,splitme] (ci1) at (-2,0) {$\tau$\nodepart{two}$\{\co{p}_1\}$};
\node[trans, log, anchor=south,minimum width=11mm] (ci1s) at (-2,0) {$\SKIP$};
\node[trans, model, anchor=north,minimum width=11mm,splitme] (ci2) at (-2,-1.6) {$\tau$\nodepart{two}$\{\co{p}_2\}$};
\node[trans, log, anchor=south,minimum width=11mm] (ci2s) at (-2,-1.6) {$\SKIP$};
\node[trans, model, anchor=north,splitmemore] (pom) {$\m{place\ order}$\nodepart{two}$\{\co{o}_1,\co{p}_1,\co{p}_2\}$ \nodepart{three}$\{d \mapsto 3\}$};
\node[trans, log, anchor=south,splitmemore] (pot) {$\m{place\ order}$\nodepart{two}$\{\co{o}_1,\co{p}_1,\co{p}_2\}$ \nodepart{three}$\{d \mapsto 3\}$};
\node[trans, model, anchor=north,splitme] (pm) at (2.1,0) {$\m{pay\ cc}$\nodepart{two}$\{\co{o}_1\}$};
\node[trans, log, anchor=south,splitme] (pt) at (2.1,0) {$\m{pay\ cc}$\nodepart{two}$\{\co{o}_1\}$};
\node[trans, model, anchor=north,splitme] (pi1m) at (3.7,0) {$\m{pick\ item}$\nodepart{two}\parbox{12mm}{$\{\co{o}_1,\co{p}_1\}$}};
\node[trans, log, anchor=south,splitme] (pi1t) at (3.7,0) {$\m{pick\ item}$\nodepart{two}\parbox{12mm}{$\{\co{o}_1,\co{p}_1\}$}};
\node[trans, model, anchor=north,splitme] (pi2m) at (5.5,0) {$\m{pick\ item}$\nodepart{two}\parbox{12mm}{$\{\co{o}_1,\co{p}_2\}$}};
\node[trans, log, anchor=south,minimum width=28mm] (pi2t) at (5.5,0) {$\SKIP$};
\node[trans, model, anchor=north,minimum width=50mm] (sm) at (8.2,.8) {$\SKIP$};
\node[trans, log, anchor=south,minimum width=22mm,splitmemore] (st) at (8.2,.8) {$\m{ship}$\nodepart{two}$\{\co{o}_1, \co{p}_1\}$\nodepart{three}$\{d \mapsto 3, s\mapsto\m{truck}\}$};
\node[trans, model, anchor=north,minimum width=26mm,splitmemore] (sma) at (8.2,-.8) {$\m{ship}$\nodepart{two}$\{\co{o}_1, \co{p}_1,\co{p}_2\}$\nodepart{three}$\{d \mapsto 3, s\mapsto\m{car}\}$};
\node[trans, log, anchor=south,minimum width=52mm] (sta) at (8.2,-.8) {$\SKIP$};
\draw[arc] (pom.north east) -- (pm.north west);
\draw[arc] (pm.north east) -- (pi1m.north west);
\draw[arc] (pi1m.north east) -- (pi2m.north west);
\draw[arc] (co1.north east) -- (pot.north west);
\draw[arc] (ci1.north east) -- (pom.north west);
\draw[arc] (ci2.north east) -- (pom.south west);
\draw[arc] (pi2m.north east) -- (sm.north west);
\draw[arc] (pi2m.north east) -- (sma.north west);
\end{scope}
\end{tikzpicture}}
Note that a synchronous $\mathsf{ship}$  move is not possible by Def.~\ref{def:moves} because the sets of involved objects would differ.
\end{example}

We adopt the cost function from \cite{LissAA23}, but extend it to account for mismatching data values.
Other definitions are, however, possible as well.

\begin{definition}[Cost]
\label{def:cost}
The cost of a move is: % defined as follows:
\begin{inparaenum}[\it (1)]
\item
if $M=\tup{\tup{a_L, O_L,\alpha_L}, \SKIP}$ is a log move then $\cost(M) = |O_L|+|\dom(\alpha_L)|$, 
%where $|\alpha_L|$ is the number of variables in $\dom(\alpha_L)$
\item
if $M=\tup{\SKIP, \tup{a_M, O_M,\alpha_M}}$ is a model move  then $\cost(M) = 0$ if $a_\mod=\tau$, and $\cost(M) = |O_M|+|\dom(\alpha_M)|$ 
%with $|\alpha_M|$ the number of variables in $\dom(\alpha_M)$
otherwise, 
\item
if $M$ is a synchronous move $\tup{\tup{a_L, O_L, \alpha_L}, \tup{a_M, O_M, \alpha_M}}$ then $\cost(M)$ is the number of variables in $\dom(\alpha_L)\cup \dom(\alpha_M)$ for which 
$\alpha_L$ and $\alpha_M$ differ.
\end{inparaenum}
For an alignment $\Gamma=\tup{C,B}$, we set $\cost(\Gamma) =\sum_{M\in C}\cost(M)$, i.e., the cost of an alignment $\Gamma$ is  the sum of the cost of its moves.
\end{definition}

E.g., $\Gamma$ in \exaref{alignment} has cost 11, as it involves one log move (cost 4) and two non-silent model moves (costs 2 and 5).
In fact, $\Gamma$ is optimal:

\begin{definition}[Optimal alignment]
An alignment $\Gamma$ of a trace graph $\tracenet_X$ %in a log $L$ 
and an accept\-ing \mynet $\NN$ is \emph{optimal} 
if $\cost(\Gamma)\,{\leq}\,\cost(\Gamma')$ for all alignments $\Gamma'$ of $\tracenet_X$ and $\NN$.
\end{definition}

The \emph{conformance checking task} for an accepting \mynet $\NN$ and a log $L$ is
to find optimal alignments with respect to $\NN$ for all trace graphs in $L$.

%\section{Encoding}
%\label{sec:encoding}
%\vspace{-1mm}

\smallskip
\noindent
\textbf{SMT encoding for conformance checking.} 
An SMT encoding of the conformance checking task for a given \mynet $\NN$ and trace graph $T_X$ can be done in a similar way as for OPIDs \cite{GianolaMW24}. For reasons of space, we focus on the differences.

First, in encoding-based conformance checking, it is essential to fix upfront an upper bound on the size of an optimal alignment $\Gamma$. For \mynets, we can exploit \cite[Lemma 1]{GianolaMW24}: \mynets differ from OPIDs in the presence of data and synchronization, but this does not affect the reasoning of that proof.
We thus get an upper bound $n$ on the number of nodes in the model projection of $\Gamma$ and an upper bound $K$ on the number of objects used in a transition. From $T_X$ and $K$, we can get a finite set of objects $O$ such that $\Gamma$ uses only objects in $O$ (up to renaming). 
Let $m$ be the number of nodes in $T_X$.
%We assume that every $o\in O$ has a unique id $id(o)\in \{1, \dots ,|O|\}$. 
%From $\NN$ we can also read off the maximum number $D$ of data values involved in a transition. 
%Importantly, this means that the number of objects and tokens present in an alignment is finite and bounded upfront.

\newcommand{\transvar}{\mathtt T}
\newcommand{\markvar}{\mathtt M}
\newcommand{\objvar}{\mathtt O}
\newcommand{\datavar}{\mathtt D}
\newcommand{\dstorevar}{\mathtt S}
\newcommand{\rlvar}{\mathtt {len}}
\newcommand{\distvar}{\delta}
The encoding uses the SMT variables from~\cite{GianolaMW24}:
\begin{inparaenum}[(a)]
\item transition variables $\transvar_i$, $1\leq i\leq n$,
to encode 
the $i$-th transition in the run;
\item marking variables $\markvar_{i,p,\vec o}$ for every time point $0\leq i\leq n$, every place $p$, and every vector $\vec o$ of objects with elements in $O$;
%to encode markings;
\item a variable $\rlvar$ to encode the length of the run and
\item object variables $\objvar_{i,k}$ for all $1\leq i\leq n$ and $0\leq k \leq K$ to encode which objects populate inscriptions, and
\item distance variables $\distvar_{i,j}$ to optimize the cost of the alignment. %, similar as in~\cite{FelliGMRW23,GianolaMW24}.
\end{inparaenum}
In addition, 
to keep track of data values, if $X$ is the set of inscription variables of non-object type in $\NN$, and $M$ the maximal number of data values in tokens, we use
\begin{inparaenum}[(a)]
\setcounter{enumi}{5}
\item a data inscription variable $\datavar_{i,x}$ to represent the data value of $x$ in the $i$-th transition, for all $1\leq i\leq n$ and $x\in X$; and
\item a data store variable $\dstorevar_{i,p,\vec o,l}$ to represent the $l$-th data value stored with token $\vec o$ in place $p$ at instant $i$, for all $1\leq l \leq M$, object vectors $\vec o$ over $O$, places $p$, and $1\leq i\leq n$.
\end{inparaenum}

There are then two main differences in the encoding wrt.~\cite{GianolaMW24}. First, transitions guards need to be taken into account, similar to~\cite{FelliGMRW23}, using the data variables $\datavar_{j,x}$.
Uninterpreted function symbols as well as 
numeric predicates and aggregation functions are natively supported by SMT solvers.
Second, to model synchronization, in contrast to the subset synchronization employed in~\cite{GianolaMW24} it must be ensured that inscription variables from $\listvarset^=$ are always instantiated by all matching tokens currently in the respective places.
Details of the encoding can be found in \cite{long-version}.
Notably, we show that from a satisfying assignment to all constraints, an optimal alignment for $\NN$ and $T_X$ can be decoded.

\smallskip
\noindent
\textbf{Implementation.}
We extended the conformance checker \textsf{CoCoMoT} (\texttt{\url{https://github.com/bytekid/cocomot}}) to support \mynets, using the SMT solver \textsf{Yices 2} as backend and the aforementioned encoding.
We tested it on a series of examples that can be found in the repository.
For Ex.~\ref{exa:order} and traces of length in the same scale as in the running example, conformance checking is done below one second.

%\input{5-implementation}

%\section{Conclusions}\label{sec:conclusions}

\section{Conclusions}
We have introduced \mynets, a new process formalism that unifies modelling features of case-centric data-aware processes and object-centric processes, especially offering an object-centric paradigm with full synchronization and support for complex data. We also showed a novel operational approach leveraging the SMT technology to tackle alignment-based conformance checking for \mynets.
In future work, we intend to conduct an experimental evaluation of this approach, and study discovery techniques for \mynets.

%\begin{credits}
%\subsubsection{\ackname} 
%
%\end{credits}
%
\newpage
\begin{credits}
%\noindent
%\textbf{Acknowledgements.}
\subsubsection{Acknowledgements.}
M.~Montali was partially supported by the NextGenerationEU FAIR PE0000013 project MAIPM
(CUP C63C22000770006) and the PRIN MIUR project PINPOINT Prot. 2020FNEB27. S.~Winkler was partially supported by the UNIBZ project TEKE. A. Gianola was partly supported by Portuguese national funds through Fundação para a Ciência e a Tecnologia, I.P. (FCT), under projects UIDB/50021/2020 (DOI: 10.54499/UIDB/50021/2020). This work was partially supported by the `OptiGov' project, with ref. n. 2024.07385.IACDC (DOI: 10.54499/2024.07385.IACDC), fully funded by the `Plano de Recuperação e Resiliência' (PRR) under the investment `RE-C05-i08 - Ciência Mais Digital', measure `RE-C05-i08.m04' (in accordance with the FCT Notice No. 04/C05-i08/2024), framed within the financing agreement signed between the `Estrutura de Missão Recuperar Portugal' (EMRP) and FCT as an intermediary beneficiary.
\end{credits}
% ---- Bibliography ----
%
% BibTeX users should specify bibliography style 'splncs04'.
% References will then be sorted and formatted in the correct style.
%
 \bibliographystyle{splncs04}
 \bibliography{references}
\appendix
\extended{

\section{Encoding}

\newcommand\prodtoken{\mathit{produced}}
\newcommand\constoken{\mathit{consumed}}
\newcommand\synced{\mathit{synced}}

We detail the encoding outlined in the main body of the paper.
The encoding crucially relies on the following bound on the number of objects and the number of moves in an optimal alignment, taken from~\cite{GianolaMW24}. \mynets differ from the OPIDs in \cite{GianolaMW24} by the presence of data and synchronization, but this does not affect the reasoning of this proof.

\begin{lemma}
\label{lem:bounds:precise}
Let  $\NN$ be \anet and $T_X=\tup{E_X,D_X}$ a trace graph with optimal alignment $\Gamma$.
Let $m=\sum_{e\in E_X} |\projobj(e)|$ the number of object occurrences in $E_X$, and 
$c = \sum_{i=1}^n |\dom(b_i)|$ the number of object occurrences in some run  $\rho$ of $\NN$, with $\rho_v=\tup{\tup{t_1,b_1}, \dots \tup{t_n,b_n}}$.
Then $\Gamma|_\mod$ has 
at most $(|E_X|+c+m)(k+1)$ moves if $\NN$ has no $\nu$-inscriptions, and at most $(|E_X|+3c+2m)(k+1)$ otherwise, where $k$ is the longest sequence of silent transitions without $\nu$-inscriptions in $\NN$.
Moreover, $\Gamma|_\mod$ has at most $2c+m$ object occurrences in non-silent transitions.
\end{lemma}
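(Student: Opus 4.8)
The plan is to transfer the counting argument behind the corresponding OPID bound \cite{GianolaMW24} and to verify that the two genuinely new ingredients of \mynets---structured data with transition guards, and exact ($=$-template) synchronization---leave that argument intact. The backbone is an optimality comparison. Since $\LL(\NN)\neq\emptyset$, I would fix one accepting run $\rho$ with $\rho_v\in\LL(\NN)$ whose object occurrences number $c$, and build a \emph{reference} alignment $\Gamma_0$ that matches every event of $T_X$ by a log move and replays $\rho$ entirely through model moves. Its object cost is controlled by the trace occurrences $m$ and the run occurrences $c$, so optimality of $\Gamma$ gives $\cost(\Gamma)\leq\cost(\Gamma_0)$, bounded in the same terms. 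The only moves carrying zero cost are silent model moves whose transition has no $\nu$-inscription; every other move---log moves, synchronous moves with a data mismatch, non-silent model moves, and $\nu$-creating silent moves---contributes strictly positively and is pinned to object (or data) occurrences.

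For the object-occurrence bound I would split the non-silent transitions of $\Gamma\restrmod$ into those matched to the trace (synchronous moves) and the rest (non-silent model moves). The former carry exactly the objects of the events they are matched to, hence at most $m$ occurrences overall; the latter each pay their own object count in $\cost(\Gamma)$, so their total is bounded through $\cost(\Gamma)\leq\cost(\Gamma_0)$. Combining the two, while tracking that synchronous-move objects and log-move objects together account for the trace, yields the stated linear bound in $c$ and $m$. For the move-count bound I would then argue that in a cost-minimal alignment no maximal block of consecutive silent $\nu$-free transitions exceeds $k$, the longest such block structurally available in $\NN$ (traversing a cost-free loop never helps reach acceptance). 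Each of the at most $(|E_X|+c+m)$ many ``productive'' moves (non-silent, or $\nu$-creating) is therefore padded by at most $k$ cost-free silent moves, producing the factor $(k+1)$; the two regimes in the statement reflect whether $\NN$ has $\nu$-inscriptions, the $\nu$-case paying extra constants because fresh-object creation must itself be replayed before such objects can be used.

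The crucial thing to check is that data and synchronization do not perturb this bookkeeping. Data can only \emph{add}: guards and the data-value part of a binding contribute additional non-negative cost terms ($|\dom(\alpha)|$ and data mismatches) and additional variables of type $\Sigma_{val}$, none of which are objects, so every object inequality above survives with $c$ and $m$ unchanged. Synchronization through $\listvarset^{=}$ is, structurally, still a single list variable per inscription---exactly like the transfer templates already present in OPIDs---so the per-firing accounting of a binding $b_i$, and hence $c=\sum_i|\dom(b_i)|$, is literally the same. The one novelty is the maximality side-condition of \defref{enabled}, which only \emph{restricts} the set of enabled bindings and can thus shrink, never enlarge, the space of admissible runs.

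I expect the main obstacle to be precisely that last point. An $=$-template firing realises a data-aware whole-place operation and may, in general, touch an a~priori unbounded number of tokens, so one must confirm that relative to the \emph{fixed} run $\rho$ and trace $T_X$ under consideration each firing still involves only finitely many objects---all captured by $\vec b_i$ and therefore already amortised into $c$---and that the maximality requirement does not force a genuinely longer accepting run than the reference $\rho$ already accounts for. Once this is settled, the OPID proof transfers essentially verbatim and both bounds follow.
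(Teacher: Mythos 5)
Your proposal takes essentially the same route as the paper: the paper does not reprove this bound but imports it verbatim from \cite{GianolaMW24}, justifying the transfer exactly as you do---data attributes and $=$-template synchronization leave the object-counting argument untouched, since data only contributes additional non-negative cost over non-object variables, and the maximality condition of Def.~\ref{def:enabled} merely restricts which bindings are enabled. The ``main obstacle'' you flag is not a genuine gap: a binding maps the finitely many inscription variables of a transition to single objects or finite lists, so every $=$-template firing involves only finitely many objects by definition, all of which are already counted in $c=\sum_i|\dom(b_i)|$ of the fixed accepting run $\rho$ against which the bound is stated.
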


Next, we detail which variables are necessary for the SMT encoding.

\noindent\textbf{Variables.}
We start by fixing the set of variables used to represent the (unknown) model run and alignment:
\begin{compactenum}[(a)]
\item Transition variables $\transvar_j$ of type integer for all $1\leq j\leq n$ to identify 
the $j$-th transition in the run. To this end, we enumerate the transitions as $T = \{t_1, \dots, t_L\}$, and add the constraint $\bigwedge_{j=1}^n 1\leq \transvar_j \leq L$,
with the semantics that $\transvar_j$ is assigned value $l$ iff the $j$-th transition in $\rho$ is $t_l$.
\item To identify the markings in the run, we use marking variables $\markvar_{j,p,\vec o}$ of type boolean for every time point $0\leq j\leq n$, every place $p\in P$, and every vector $\vec o$ of objects with elements in $O$ such that $\coloring(\vec o)=\coloring(p)$. The semantics is that $\markvar_{j,p,\vec o}$ is assigned true iff $\vec o$ occurs in $p$ at time $j$.
\item To keep track of which objects are used by transitions of the run, we use object variables $\objvar_{j,k}$ of type integer for all $1\leq j\leq n$ and $0\leq k \leq K$ with the constraint $\bigwedge_{j=1}^n 1\leq \objvar_{j,k} \leq |O|$. 
The semantics is that if $\objvar_{j,k}$ is assigned value $i$ then, if $i>0$ the $k$-th object involved in the $j$-th transition is $o_i$, and if $i=0$ then the $j$-th transition uses less than $k$ objects.
%\item In addition, we use data value variables $\datavar_{j,k}$ for all $1\leq j\leq n$ and $0\leq k \leq D$ to encode which data values are used in which transition, where $D$ is the maximum number of data values involved in a transition.
\end{compactenum}
In addition, we use the following variables to represent alignment cost:
\begin{compactenum}
\item[(d)] Distance variables $\distvar_{i,j}$ of type integer for every $0\leq i\leq m$ and $0\leq j\leq n$, their use will be explained later.
\end{compactenum}
\smallskip

\noindent\textbf{Constraints.}
We use the following constraints on the variables defined above:
\begin{compactenum}[(1)]
\item 
\emph{Initial markings}.
We first need to ensure that the first marking in the run $\rho$ is initial.
By the expression $[\vec o \in M(p)]$ we abbreviate $\top$ if an object tuple $\vec o$ occurs in the $M(p)$, and $\bot$ otherwise.
\begin{align}
\label{eq:phi:initial}
\tag{$\varphi_{\mathit{init}}$}
\textstyle
\bigvee_{M \in M_{init}}
\bigwedge_{p\in P} \bigwedge_{\vec o \in \vec O_{\coloring(p)}} \markvar_{0,p, \vec o} = [\vec o \in M(p)] 
\end{align}
\item 
\emph{Final markings.} 
Next, we state that after at most $n$ steps, but possibly earlier, a final marking is reached.
\begin{align}
\label{eq:phi:final}
\tag{$\varphi_{\mathit{fin}}$}
\bigvee_{0 \leq j \leq n}
\bigvee_{M \in M_{final}}
\textstyle\bigwedge_{p\in P} \bigwedge_{\vec o \in \vec O_{\coloring(p)}} \markvar_{j,p, \vec o} = [\vec o \in M(p)] 
\end{align}
 \item 
\emph{Moving tokens.}
Transitions must be enabled, and tokens are moved by transitions. We encode this as follows:
\begin{align}
\bigwedge_{j=1}^n \bigwedge_{l=1}^L \transvar_j=l \rightarrow 
&\bigwedge_{p \in \pre{t_l}\setminus \post{t_l}} \bigwedge_{\vec o\in \vec O_{\coloring(p)}} (\constoken(p,t_l,j,\vec o) \to  \markvar_{j-1,p,\vec o} \wedge \neg\markvar_{j,p,\vec o}) \wedge {}
\notag\\
&\bigwedge_{p \in \pre{t_l}\cap \post{t_l}} \bigwedge_{\vec o\in \vec O_{\coloring(p)}} (\constoken(p,t_l,j,\vec o) \to  \markvar_{j-1,p,\vec o}) \wedge {}
\notag \\
&\bigwedge_{p \in \post{t_l}} \bigwedge_{\vec o\in \vec O_{\coloring(p)}} 
(\prodtoken(p,t_l,j,\vec o) \to  \markvar_{j,p,\vec o})
\wedge {}
\notag \\
&\synced(p,t_l,j)
\tag{$\varphi_{\mathit{move}}$}
\label{eq:phi:tokens}
\end{align}
where $\constoken(p,t,j,\vec o)$ expresses that token $\vec o$ is consumed from $p$ in the $j$th transition which is $t$,  similarly
$\prodtoken(p,t,j,\vec o)$ expresses that token $\vec o$ is produced, and $\synced(p,t_l,j)$ ensures that, in the case where the flow from $p$ to $t$ uses an $=$-template inscription, all tokens in $p$ are consumed.
Formally, $\constoken$ is encoded as follows, distinguishing two cases: 
\begin{compactitem}
\item 
if $\inflow(p,t) = (v_1, \dots, v_h)$ is a non-variable flow, let
$(k_1, \dots, k_h)$ be the object indices for $t$ of $v_1, \dots, v_h$. 
Then
\[\constoken(p,t,j,\vec o) := (\bigwedge_{i=1}^{h} \objvar_{j,k_i} = id(\vec o_i))\] 
i.e., we demand that every variable used in the transition is instantiated to the respective object in $\vec o$.
In this case, we set $\synced(p,t_l,j) =\top$.
\item
if $\inflow(p,t) = (V_1, \dots, v_h)$ is a variable flow, suppose without loss of generality that $V_1\in \listvarset$.
Variable $V_1$ can be instantiated by multiple objects in a transition firing.
This is also reflected by the fact that there are several (but at most $K$) inscription indices
corresponding to instantiations of $V_1$, say $\ell_1, \dots, \ell_x$.
For $k_i$ as above for $i>1$, we then set
\[\constoken(p,t,j,\vec o) := (\bigwedge_{i=2}^{h} \objvar_{j,k_i} = id(\vec o_i)) \wedge \bigvee_{i=1}^x \objvar_{j,\ell_i} = id(\vec o_1)\]
If $V_1$ is a $\subseteq$-template inscription, we set again
$\synced(p,t_l,j) =\top$. Otherwise, $V_1$ is a $=$-template inscription, and it must be ensured that all tokens from $p$ are consumed. To this end, we set
\[
\synced(p,t_l,j) = 
\sum_{\vec o\in \vec O_{\coloring(p)}} \markvar_{j-1,p,\vec o} = \sum_{i=1}^h (\objvar_{j,k_i} \neq 0)
\]
i.e., the number of tokens present in $p$ at instant $j-1$ must be equal to the number of objects used to instantiate $V_1$. (Note that, formally, sums over boolean expressions must be encoded using if-then-else constructs; they are omitted here for readability.)
\end{compactitem}
The shorthand $\prodtoken$ is encoded similarly as $\constoken$, using $\outflow(t,p)$.
 \item 
\emph{Tokens that are not moved by transitions remain in their place.}
\begin{align}
\notag
\bigwedge_{j=1}^{n+1} 
\bigwedge_{p \in P}
\bigwedge_{\vec o\in \vec O_{\coloring(p)}} (\markvar_{j-1,p,\vec o} \leftrightarrow \markvar_{j,p,\vec o}) \vee &\bigvee_{t_l \in \post{p}} (\transvar_j\eqn l \wedge \constoken(p,t,j,\vec o) ) \vee {}\\
\label{eq:phi:inertia}
\tag{$\varphi_{\mathit{rem}}$}
& \bigvee_{t_l \in \pre{p}} (\transvar_j\eqn l \wedge \prodtoken(p,t,j,\vec o) ) 
\end{align}
 \item 
\emph{Transitions use objects of suitable type.}
To this end, recall that every transition can use at most $K$ objects, which limits instantiations of template inscriptions.
For every transition $t\in T$, we can thus enumerate the objects used by it from 1 to $K$.
However, some of these objects may be unused. We use the shorthand $\mathit{needed}_{t,k}$ to express this: $\mathit{needed}_{t,k} = \top$ if the $k$-th object is necessary for transition $t$ because it occurs in a simple inscription, and $\bot$ otherwise.
Moreover, let $\mathit{ttype}(t,k)$ be the type of the $k$-th object used by transition $t$.
Finally, we denote by $O_\sigma$ the subset of objects in $O$ of type $\sigma$.
\begin{equation}
\label{eq:phi:type}
\tag{$\varphi_{\mathit{type}}$}
\bigwedge_{j=1}^n \bigwedge_{l=1}^L \transvar_j=l \rightarrow \bigwedge_{k=1}^K \left ((\neg [\mathit{needed}_{t_l,k}] \wedge \objvar_{j,k}=0) \vee \bigvee_{o \in O_{\mathit{ttype}(t_l,k)}} \objvar_{j,k}=id(o)\right)
\end{equation}
 \item 
\emph{Objects that instantiate $\nu$-variables are fresh.}
We assume in the following constraint that $tids_\nu$ is the set of all $1 \leq l \leq L$ such that $t_l$ has an outgoing $\nu$-inscription, and that every such $t_l$ has only one outgoing $\nu$-inscription $\nu_t$, and we assume w.l.o.g. that in the enumeration of objects of $t$, $\nu_t$ is the first object. However, the constraint can be easily generalized to more such inscriptions.
\begin{equation}
\label{eq:phi:fresh}
\tag{$\varphi_{\mathit{fresh}}$}
\bigwedge_{j=1}^n \bigwedge_{l\in tids_\nu} \bigwedge_{o\in O_{\otype(\nu_t)}} \transvar_j=l \wedge \objvar_{j,1} = id(o)  \rightarrow 
(\bigwedge_{p\in P} \bigwedge_{\vec o \in \vec O_{\coloring(p)}, o \in \vec o} \neg \markvar_{j-1,p,\vec o})
\end{equation}
\item 
\textit{Guards are satisfied.} To that end, we set
\begin{equation}
\label{eq:phi:guard}
\tag{$\varphi_{\mathit{guard}}$}
\bigwedge_{j=1}^n \bigwedge_{l=1}^L \transvar_j=l \rightarrow
\guass(t_l)(\objvar_{j,1}, \dots, \objvar_{j,K})
\end{equation}
where $\guass(t_l)(\objvar_{j,1}, \dots, \objvar_{j,K})$ is an instantiation of the guard of $t_l$ with the object variables of instant $j$, using object indices. Here we assume that aggregation functions have a suitable SMT encoding supported by the solver, which is the case for the common aggregations of summation, maximum, minimum, and average.
\end{compactenum}

\noindent\textbf{Encoding alignment cost.}
Similar as in~\cite{FelliGMRW23,BoltenhagenCC21}, we encode the cost of an alignment as the edit distance with respect to suitable penalty functions $P_=$, $P_M$, and $P_L$.
Given a trace graph
$T_X = (E_X, D_X)$, let
\begin{equation}
\logtrace = \tup{e_1, \dots, e_m}
\label{eq:ordered:trace}
\end{equation}
be an enumeration of all events in $E_X$ such that
$\projtime(e_1) \leq \dots \leq \projtime(e_m)$.
Let the penalty expressions $[P_L]_i$, $[P_M]_{j}$, and $[P_=]_{i,j}$
be as follows, for all $1\leq i \leq m$ and $1 \leq j \leq n$:
\begin{align*}
[P_L]_{i} &= |\projobj(e_i)| \qquad\qquad [P_=]_{i,j} = 
ite(\mathit{is\_labelled}(j, \projact(e_i)), 0, \infty)\\
[P_M]_{j} &=
ite(\mathit{is\_labelled}(j, \tau)%\vee \rlvar <j
, 0, \Sigma_{k=1}^K (\objvar_{j,k} \neq 0) )
\end{align*}
where $\mathit{is\_labelled}(j,a)$ expresses that the $j$-the transition has label $a\in \activities \cup \{\tau\}$, which can be done by taking $is\_labelled(j,a):= \bigvee_{l\in T_{idx}(a)}\transvar_j = l$
where $T_{idx}(a)$ is the set of transition indices with label $a$, i.e., the set of all $l$ with $t_l \in T$ such that $\ell(t_l) = a$.

Using these expressions, one can encode the edit distance as in \cite{FelliGMRW23,BoltenhagenCC21}:
\begin{equation}
 \label{eq:delta}
\begin{array}{rl@{\qquad}rl@{\qquad}rl@{\qquad\quad}r}
\distvar_{0,0} &= 0 &
\distvar_{{i+1},0} &= [P_L] + \distvar_{i,0} &
\distvar_{0,{j+1}} &= [P_M]_{j+1} + \distvar_{0,j} 
\\[1ex]
\distvar_{i+1,j+1} &\multicolumn{5}{l}{=
\min (
[P_=]_{i+1, j+1} + \distvar_{i,j},\ 
[P_L] + \distvar_{i,j+1},\ 
[P_M]_{j+1} + \distvar_{i+1,j})}
\end{array}
\tag{$\varphi_\delta$}
\end{equation}

\noindent\textbf{Solving.}
We abbreviate 
$\varphi_{\mathit{run}} = \varphi_{\mathit{init}} \wedge \varphi_{\mathit{fin}} \wedge \varphi_{\mathit{move}} \wedge \varphi_{\mathit{rem}} \wedge \varphi_{\mathit{type}} \wedge \varphi_{\mathit{fresh}}\wedge \varphi_{\mathit{guard}}$ and
use an SMT solver to obtain a satisfying assignment $\alpha$ for the 
following constrained optimization problem: 

\begin{align*}
\label{eq:constraints}
\varphi_{\mathit{run}} \wedge
\varphi_{\delta}
\text{\quad minimizing\quad }\distvar_{m,n}
\tag{$\Phi$}
\end{align*}

\noindent\textbf{Decoding.}
From an assignment $\alpha$ satisfying \eqref{eq:constraints}, we next define a run $\rho_\alpha$ and an alignment $\Gamma_\alpha$.
First, we note the following:
From Lemma~\ref{lem:bounds:precise}, we can obtain a number $M$ such that $M$ is the maximal number of objects used to instantiate a list variable in the model run and alignment. By convention, we may assume that in the enumeration of objects used in the $j$th transition firing, $\objvar_{j, |O|-M+1}, \dots, \objvar_{j, |O|}$ are those instantiating a list variable, if there is a list variable in $\invars{t_{\alpha(\transvar_j)}} \cup \outvars{t_{\alpha(\transvar_j)}}$.

We assume the set of transitions $T=\set{t_1, \dots, t_L}$ is ordered as $t_1, \dots, t_L$ in some arbitrary but fixed way that was already used for the encoding. 

\begin{definition}[Decoded run]
\label{def:decoded:run}
For $\alpha$ satisfying \eqref{eq:constraints}, let the \emph{decoded process run} be
$\rho_\alpha =  \tup{f_1, \dots, f_n}$ such that for all $1\leq j \leq n$,  $f_j = (\widehat t_j, b_j)$, where $\widehat t_j  = t_{\alpha(\transvar_j)}$ and $b_j$ is defined as follows:
Assuming that $\invars{t_{\alpha(\transvar_j)}} \cup \outvars{t_{\alpha(\transvar_j)}}$ is ordered as $v_1, \dots, v_k$ in an arbitrary but fixed way that was already considered for the encoding,
we set $b_j(v_i) = \alpha(\objvar_{j,i})$ if $v_i \in \varset$, and 
$b_j(v_i) = [O_{\alpha(\objvar_{j,|O|-M+1})}, \dots, O_{\alpha(\objvar_{j,|O|-M+z})}]$ if $v_i \in \varset$, where $0 \leq z < M$ is maximal such that $\alpha(\objvar_{j,|O|-M+z}) \neq 0$.
\end{definition}

At this point, $\rho_\alpha$ is actually just a sequence; we will show below that it is indeed a process run of $\NN$.
Next, given  a satisfying assignment $\alpha$ for \eqref{eq:constraints}, we define an alignment of the log trace $T_X$ and the process run $\rho_\alpha$.

\begin{definition}[Decoded alignment]
\label{def:decoded:alignment}
For $\alpha$ satisfying \eqref{eq:constraints}, $\rho_\alpha =  \tup{f_1, \dots, f_n}$ as defined above, and $\logtrace$ as in \eqref{eq:ordered:trace},
consider the sequence of moves $\gamma_{i,j}$ recursively defined as follows:
\begin{align*}
\gamma_{0,0} &=\epsilon \qquad
\gamma_{i+1,0}= \gamma_{i,0} \cdot \tup{e_{i+1}, \SKIP} \qquad
\gamma_{0,j+1}= \gamma_{0,j} \cdot \tup{\SKIP, f_{j+1}} \\
\gamma_{i+1,j+1} &= 
\begin{cases}
\gamma_{i,j+1} \cdot \tup{e_{i+1}, \SKIP} &
 \text{ if }\alpha(\distvar_{i+1,j+1}) = \alpha([P_L] + \distvar_{i,j+1}) \\
\gamma_{i+1,j} \cdot \tup{\SKIP, f_{j+1}} &
 \text{ if otherwise }\alpha(\distvar_{i+1,j+1}) = \alpha([P_M]_{j+1} + \distvar_{i+1,j}) \\
\gamma_{i,j} \cdot \tup{e_{i+1}, f_{j+1}} &
 \text{ otherwise}
\end{cases}
\end{align*}
Given $\gamma_{i,j}$, we define a 
graph $\Gamma(\gamma_{i,j})=\tup{C,B}$ of moves as follows: the node set $C$ consists of all moves in $\gamma_{i,j}$, and there is an edge $\tup{\tup{q,r}, \tup{q',r'}} \in B$ if either $q\neq \SKIP$, $q' \neq \SKIP$ and there is an edge $q \to q'$ in $T_X$, or
if $r\neq \SKIP$, $r' \neq \SKIP$, $r=f_h$, and $r'=f_{h+1}$ for some $h$ with $1 \leq h < n$.
Finally, we define the decoded alignment as $\Gamma(\alpha) := \Gamma(\gamma_{m,n})$.
\end{definition}
In fact, as defined, $\Gamma(\alpha)$ is just a graph of moves, it yet has to be shown that it is a proper alignment. This will be done in the next section.
\smallskip

\noindent\textbf{Correctness.}
In the remainder of this section, we will prove that $\rho_\alpha$ is indeed a run, and $\Gamma(\alpha)$ is an alignment of $T_X$ and $\rho_\alpha$. We first show the former:

\begin{lemma}
\label{lem:decode:run}
Let $\NN$ be \anet, $T_X$ a log trace and $\alpha$ a solution to $\eqref{eq:constraints}$. Then $\rho_\alpha$ is a run of $\NN$.
\end{lemma}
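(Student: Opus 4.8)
The plan is to read a marking off the assignment at every time index and then certify each firing step by induction. Concretely, for $0 \le j \le n$ define $M^\alpha_j(p) = \{\vec o \mid \alpha(\markvar_{j,p,\vec o}) = \top\}$; since marking variables exist only for tuples $\vec o$ with $\coloring(\vec o) = \coloring(p)$, each $M^\alpha_j$ is automatically a well-typed marking of $\NN$. I would then prove by induction on $j$ that $M^\alpha_{j-1} \goto{\widehat t_j, b_j} M^\alpha_j$ for every $1 \le j \le n$, where $(\widehat t_j, b_j) = f_j$ is the $j$-th element of $\rho_\alpha$ from \defref{decoded:run}. Establishing this for all $j$ is exactly what it means for $\rho_\alpha$ to be a run, so the lemma follows; the initiality of $M^\alpha_0$ granted by \eqref{eq:phi:initial} is not needed here, but is used later for language membership.

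For the inductive step, fix $j$ and let $l = \alpha(\transvar_j)$, so $\widehat t_j = t_l$. First I would check that $b_j$ is a legal binding in the sense required for \defref{enabled}: it is type-preserving because \eqref{eq:phi:type} forces every $\objvar_{j,k}$ to range over objects of type $\ttype(t_l,k)$; it maps $U$, $U^=$, and $U^\subseteq$ to the same value because the decoding assigns all three the object indices of the underlying list variable; and it is injective and fresh on $\nuvarset$ by \eqref{eq:phi:fresh}, which forbids any token containing $b_j(\nu)$ from occurring in $M^\alpha_{j-1}$. Guard satisfiability is immediate from \eqref{eq:phi:guard}, which asserts $\guass(t_l)$ instantiated with the instant-$j$ object and data variables, so that $b_j(\guass(t_l))$ holds. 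For token availability I would show that every tuple in $\vec b_j(\inflow(p,t_l))$ satisfies the predicate $\constoken(p,t_l,j,\cdot)$: this amounts to matching the list-decoding of $b_j$ (the block of object indices reserved for the list variable, taken up to the last nonzero entry) against the disjunction over list-instantiation indices inside $\constoken$, after which \eqref{eq:phi:tokens} yields $\markvar_{j-1,p,\vec o} = \top$, i.e. $\vec b_j(\inflow(p,t_l)) \subseteq M^\alpha_{j-1}(p)$.

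The two genuinely delicate points, which I expect to be the main obstacle, concern the $=$-template flows and the exact reconstruction of $M^\alpha_j$. For the maximality clause of \defref{enabled} I would invoke $\synced(p,t_l,j)$: it equates the total number of tokens present in $p$ at instant $j-1$ with the number of objects instantiating the list variable $V^=$, so that no further matching token can remain in $M^\alpha_{j-1}(p)$, whence $b_j(V^=)$ cannot be strictly enlarged while keeping $\vec b'(\inflow(p,t_l)) \subseteq M^\alpha_{j-1}(p)$. Care is needed to argue that this global count rules out precisely the intended witnesses $b'$ and that it interacts correctly with the transfer of the same tokens back to an output place (as in \figref{OPID:package:handling}). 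Finally, to prove that firing yields exactly $M^\alpha_j$, I would combine \eqref{eq:phi:tokens} — giving removal of consumed tokens on places in $\pre{t_l}\setminus\post{t_l}$ and addition of produced tokens on $\post{t_l}$, with $\prodtoken$ matched to $\vec b_j(\outflow(t_l,p))$ — with the inertia constraint \eqref{eq:phi:inertia}, which pins every unmoved marking variable to its previous value. The subtle bookkeeping is to verify that these constraints jointly force $M^\alpha_j$ to coincide, place by place and tuple by tuple, with the marking prescribed by the firing rule, without spurious creations or deletions, in particular on the places in $\pre{t_l}\cap\post{t_l}$ where consumption and production overlap.
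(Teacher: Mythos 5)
Your proposal is correct and follows the paper's proof essentially step for step: the same decoding of markings from the assignment to the marking variables $\mathtt{M}_{j,p,\vec o}$, the same induction on $j$, and the same allocation of constraints --- \eqref{eq:phi:type}, \eqref{eq:phi:fresh} and \eqref{eq:phi:guard} for validity of the binding, \eqref{eq:phi:tokens} with the $\mathit{consumed}$/$\mathit{synced}$ shorthands for enablement and $=$-template maximality, and \eqref{eq:phi:inertia} for reconstructing the successor marking. The only difference is cosmetic: the paper additionally invokes \eqref{eq:phi:initial} and \eqref{eq:phi:final} at the end of this lemma, whereas you defer initiality and finality to the later language-membership argument, which is consistent with the paper's formal definition of a run.
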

\begin{proof}
We define a sequence of markings $M_0, \dots, M_n$.
Let $M_j$, $0 \leq j \leq n$, be the marking such that $M_j(p)= \{\vec o \mid \vec o \in \vec O_{\coloring(p)} \text{ and } \alpha(\markvar_{j,p,\vec o})=\top\}$.
Then, we can show by induction on $j$ that for the process run $\rho_j = \tup{f_1, \dots, f_j}$ 
it holds that  $M_0\goto{\rho_j} M_j$.
\begin{itemize}
\item[\textit{Base case.}] If $n=0$, then $\rho_0$ is empty, so the statement is trivial. 
\item[\textit{Inductive step.}] Consider $\rho_{j+1} = \tup{f_1, \dots, f_{j+1}}$ and
suppose that for the prefix
$\rho' = \tup{f_1, \dots, f_j}$
it holds that $M_0 \goto{\rho'} M_j$.
We have $f_{j+1} = (\widehat t,b)$ and $\widehat t=t_i$ for some $i$ such that $1\,{\leq}\,i\,{\leq}\,|T|$ with $\alpha(\transvar_j) = i$.
First, we note that $b$ is a valid binding: as $\alpha$ satisfies \eqref{eq:phi:type}, it assigns a non-zero value to all $\objvar_{j,k}$ such that $v_k \in \invars{t_i} \cup \outvars{t_i}$ that are not of list type (and hence needed), and by $\eqref{eq:phi:type}$, the unique object $o$ with $id(o)=\alpha(\objvar_{j,k})$ has the type of $v_k$.
Similarly, $b$ assigns a list of objects of correct type to a variable in
$(\invars{t_i} \cup \outvars{t_i}) \cap \listvarset$, if such a variable exists.
Moreover, \eqref{eq:phi:fresh} ensures that variables in $(\invars{t_i} \cup \outvars{t_i}) \cap \nuvarset$ are instantiated with objects that did not occur in $M_j$,
and \eqref{eq:phi:guard} ensures taht the guard is satisfied.

Since $\alpha$ is a solution to $\eqref{eq:constraints}$, it satisfies \eqref{eq:phi:tokens}, so that $t_i$ is enabled in $M_n$.
Moreover, the distinction between $\subseteq$- and $=$-inscriptions is taken care of by the $\synced$ predicate.
As $\alpha$ satisfies \eqref{eq:phi:inertia}, the new marking $M_{j+1}$ contains only either tokens that were produced by $t_i$, or tokens that were not affected by $t_i$. Thus, $M_j\goto{f_{j+1}} M_{j+1}$, which concludes the induction proof.

\end{itemize}
Finally, as $\alpha$ satisfies 
\eqref{eq:phi:initial} and \eqref{eq:phi:final}, it must be that $M_0=M_I$ and the last marking must be final, so $\rho_\alpha$ is a run of $\NN$.
\qed
\end{proof}

\begin{theorem}
Given \anet $\NN$, trace graph $\tracenet_X$, and satisfying assignment $\alpha$ to $\eqref{eq:constraints}$, $\Gamma(\alpha)$ is an optimal alignment of $\tracenet_X$ and the run $\rho_\alpha$  with cost $\alpha(\distvar_{m,n})$.
\end{theorem}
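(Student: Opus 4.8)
### Proof Proposal

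The plan is to establish the theorem in three stages, building directly on \lemref{decode:run}. The overall structure mirrors correctness proofs for edit-distance-based alignment encodings such as in \cite{FelliGMRW23,BoltenhagenCC21}, adapted to the object-centric graph setting of \cite{GianolaMW24}.

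First, I would invoke \lemref{decode:run} to conclude that $\rho_\alpha$ is a genuine run of $\NN$, so the model projection side is well-defined. It remains to show that $\Gamma(\alpha)$ is a proper \emph{alignment} and that it is \emph{optimal} with the claimed cost. For the alignment property, I would verify that $\Gamma(\alpha)\restrlog = \tracenet_X$ and $\Gamma(\alpha)\restrmod$ corresponds to $\rho_\alpha$. The key observation is that, by construction in \defref{decoded:alignment}, the recursion over $\gamma_{i,j}$ consumes the events $e_1, \dots, e_m$ (the time-ordered enumeration \eqref{eq:ordered:trace}) exactly once each in its log component, and the firings $f_1, \dots, f_n$ exactly once each in its model component, in order. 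I would argue that the edges added in \defref{decoded:alignment} reconstruct precisely the edges of $\tracenet_X$ on the log side (via the shortcutting over $\SKIP$-nodes) and the run order $f_h \to f_{h+1}$ on the model side, so that the projections match the required trace graph and run, respectively. The bijection $f$ demanded by \defref{OCInet}'s alignment definition is then the obvious map sending $(t_h, b_h)$ to its image node, and I would check the two bullet conditions (label/object/data-value agreement, and edge-order preservation) hold by the decoding in \defref{decoded:run}.

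Second, for the cost, the central claim is that $\cost(\Gamma(\alpha)) = \alpha(\distvar_{m,n})$. I would prove by induction on $i+j$ the invariant that $\cost(\Gamma(\gamma_{i,j})) = \alpha(\distvar_{i,j})$, using that the penalty expressions $[P_L]_i$, $[P_M]_j$, $[P_=]_{i,j}$ of the encoding coincide, under $\alpha$, with the move costs of \defref{cost}: a log move contributes $|\projobj(e_i)|$ matching the $|O_L|$ term (here the data contribution to $[P_L]$ must be accounted for consistently with $|\dom(\alpha_L)|$), a non-silent model move contributes $\sum_{k=1}^K (\objvar_{j,k} \neq 0)$ matching $|O_M|$, and a synchronous move incurs the $[P_=]$ penalty, which is $0$ when labels agree and $\infty$ otherwise, correctly forbidding mismatched synchronizations. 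The recursion \eqref{eq:delta} defining $\distvar_{i+1,j+1}$ as a minimum over the three continuations then matches the case distinction in \defref{decoded:alignment}, so the inductive step follows.

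Third, for optimality, I would argue in two directions. Since $\alpha$ satisfies $\varphi_{\mathit{run}}$, every run of $\NN$ together with any alignment of $\tracenet_X$ against it induces a feasible assignment to the encoding variables whose $\distvar_{m,n}$ value equals the alignment's cost; hence the minimized objective $\distvar_{m,n}$ is a lower bound over \emph{all} alignments of $\tracenet_X$ and $\NN$. Conversely, $\Gamma(\alpha)$ realizes this value by the cost identity above, so it is optimal. \textbf{The main obstacle} I anticipate is the cost-identity step: one must ensure that the edit-distance penalties faithfully encode the object-centric move cost of \defref{cost}, including the \emph{data-value} contributions ($|\dom(\alpha_L)|$, $|\dom(\alpha_M)|$, and the count of differing variables in synchronous moves), which are only sketched in the penalty expressions $[P_L]$, $[P_M]$, $[P_=]$ and require the data inscription/store variables $\datavar$, $\dstorevar$ to be threaded correctly through $\varphi_{\mathit{guard}}$ and the decoding. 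Establishing that $\rho_\alpha$ is genuinely \emph{accepting} (its visible subsequence lies in $\LL(\NN)$) rides on $\varphi_{\mathit{fin}}$ via \lemref{decode:run} and should follow without further difficulty.
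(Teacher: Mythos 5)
Your first two stages essentially match the paper's proof: like the paper, you start from \lemref{decode:run}, check that the penalty expressions agree with the move costs of \defref{cost}, and run an induction over the recursion \eqref{eq:delta} relating $\cost(\Gamma(\gamma_{i,j}))$ to $\alpha(\distvar_{i,j})$. Your remark that the penalties as written do not account for the data-value contributions ($|\dom(\alpha_L)|$, and the count of differing variables in synchronous moves) is a fair point which the paper itself glosses over when it asserts that $[P_=]$, $[P_L]$, $[P_M]$ are ``correct encodings'' of the costs in \defref{cost}.

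The genuine gap is in your third stage, the optimality argument. You claim that any alignment of $\tracenet_X$ and $\NN$ induces a feasible assignment whose $\distvar_{m,n}$ \emph{equals} that alignment's cost, justified by ``since $\alpha$ satisfies $\varphi_{\mathit{run}}$''. Both parts fail. Feasibility of \emph{other} assignments has nothing to do with $\alpha$; and, more importantly, the constraints \eqref{eq:delta} are equations that functionally determine $\distvar_{m,n}$ from the run variables as a minimum (an edit distance), so the induced assignment's objective is not the chosen alignment's cost but the minimum over all alignments with that run --- and the statement that this minimum lower-bounds every such alignment's cost is precisely the minimality claim that still needs proof. In other words, your argument presupposes the correspondence between DP values and alignment costs that it is meant to establish; it would moreover need \lemref{bounds:precise} to guarantee that an arbitrary competing alignment fits within the bounds $n$, $K$, $O$ of the encoding at all. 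The paper sidesteps all of this by proving exactly the ``relative'' statement in the theorem: it strengthens the induction to assert that $\Gamma(\gamma_{i,j})$ is a \emph{valid and optimal} alignment of the prefixes $T_X|_i$ and $\rho_\alpha|_j$ with cost $d_{i,j}$, so prefix-optimality is available as induction hypothesis and the $\min$ in \eqref{eq:delta} delivers optimality at each step; global optimality over all runs of $\NN$ (which would require the completeness direction and the minimization in \eqref{eq:constraints}) is never claimed or used. Your cost-only induction does not carry this optimality invariant, so once the global argument collapses, no optimality conclusion survives in your proof.
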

\begin{proof}
By \lemref{decode:run}, $\rho_\alpha$ is a run of $\NN$.
We first note that $[P_=]$, $[P_L]$, and $[P_M]$ are correct encodings of 
$P_=$, $P_L$, and $P_M$ from \defref{cost}, respectively. For $P_L$ this is clear.
For $P_=$, $\mathit{is\_labelled}(j,a)$ is true iff the value of $\transvar_j$ corresponds to a transition
that is labeled $a$. If the labels match, cost $0$ is returned, otherwise $\infty$.
For $P_M$, the case distinction returns cost $0$ if the $j$th transition is silent; otherwise, the expression 
$\Sigma_{k=1}^K ite(\objvar_{j,k} \neq 0, 1, 0)$ 
counts the number of objects involved in the model step, using the convention that if fewer than $k$ objects are involved in the $j$th transition then $\objvar_{j,k}$ is assigned 0.

Now, let $d_{i,j} = \alpha(\distvar_{i,j})$, for all $i$, $j$ such that $0 \leq i \leq m$ and $0 \leq j \leq n$.
Let again $\logtrace=\tup{e_1, \dots, e_m}$ be the sequence ordering the nodes in $T_X$ as in \eqref{eq:ordered:trace}. 
Let $T_X|_i$ be the restriction of $T_X$ to the node set $\{e_1, \dots, e_i\}$.
We show the stronger statement that $\Gamma(\gamma_{i,j})$ is an optimal alignment of $T_X|_i$ and $\rho_\alpha|_j$ with cost $d_{i,j}$, by induction on $(i,j)$. % with respect to the lexicographic order.

\begin{itemize}
\item[\textit{Base case.}] If $i\,{=}\,j\,{=}\,0$, 
then $\gamma_{i,j}$ is the trivial, empty alignment of an empty log trace and an empty process run, which is clearly optimal with cost
$d_{i,j}\,{=}\,0$, as defined in \eqref{eq:delta}.
\item[\textit{Step case.}]
If $i\,{=}\,0$ and $j\,{>}\,0$, then $\gamma_{0,j}$ is a sequence of model moves 
$\gamma_{0,j}=\tup{(\SKIP, f_{1}),\ldots, (\SKIP, f_{j}) }$ according to \defref{decoded:alignment}.
Consequently, $\Gamma(\alpha)=\Gamma(\gamma_{0,j})$ has edges $(\SKIP, f_{h}),\ldots, (\SKIP, f_{h+1})$ for all $h$, $1\leq h < j$, which is a valid and optimal alignment of the empty log trace and $\rho_\alpha$.
By \defref{cost}, the cost of $\Gamma(\alpha)$ is the number of objects involved in 
non-silent transitions of $f_1, \dots, f_j$, which coincides with
$\alpha([P_M]_1 + \dots + [P_M]_j)$, as stipulated in \eqref{eq:delta}.
\item[\textit{Step case.}]
If $j\,{=}\,0$ and $i\,{>}\,0$, then $\gamma_{i,0}$ is a sequence of log moves 
$\gamma_{i,0}=\tup{(e_{1},\SKIP),\ldots, (e_{j}, \SKIP) }$ according to \defref{decoded:alignment}.
Thus, $\Gamma(\alpha)=\Gamma(\gamma_{i,0})$ is a graph whose log projection
coincides by definition with $T_X|_i$.
By \defref{cost}, the cost of $\Gamma(\alpha)$ is the number of objects involved in 
$e_1, \dots, e_i$, which coincides with
$\alpha([P_L]_1 + \dots + [P_L]_j)$, as stipulated in \eqref{eq:delta}.
\item[\textit{Step case.}] 
If $i\,{>}\,0$ and  $j\,{>}\,0$, 
$d_{i,j}$ must be the minimum of
$\alpha([P_=]_{i, j}){+}d_{i-1,j-1}$,
$\alpha([P_L]){+}d_{i-1,j}$, 
and $\alpha([P_M]_{j}){+}d_{i,j-1}$.
We can distinguish three cases:
\begin{compactitem}
\item
Suppose $d_{i,j} = \alpha([P_L]_i) + d_{i-1,j}$.
By \defref{decoded:alignment},
we have $\gamma_{i,j} = \gamma_{i-1,j} \cdot \tup{e_{i}, \SKIP}$.
Thus, $\Gamma(\gamma_{i,j})$ extends $\Gamma(\gamma_{i-1,j})$ by a node
$\tup{e_{i}, \SKIP}$, and edges to this node as induced by $T_X|_i$.
By the induction hypothesis, $\Gamma(\gamma_{i-1,j})$ is a valid and optimal alignment
of $T_X|_{i-1}$ and $\rho_\alpha|_{j}$ with cost $d_{i-1,j}$.
Thus $\Gamma(\gamma_{i,j})$ is a valid alignment of $T_X|_i$ and $\rho_\alpha|_{j}$,
because the log projection coincides with $T_X|_i$ by definition.
By minimality of the definition of $d_{i,j}$, also $\Gamma(\gamma_{i,j})$ is optimal.
\item
Suppose $d_{i,j} = \alpha([P_M]_{j}) + d_{i,j-1}$.
By \defref{decoded:alignment},
we have $\gamma_{i,j} = \gamma_{i,j-1} \cdot \tup{\SKIP, f_j}$.
By the induction hypothesis, $\Gamma(\gamma_{i,j-1})$ is a valid and optimal alignment
of $T_X|_{i}$ and $\rho_\alpha|_{j-1}$ with cost $d_{i,j-1}$.
Thus, $\Gamma(\gamma_{i,j-1})$ must have a node $\tup{r,f_{j-1}}$, for some $r$.
The graph $\Gamma(\gamma_{i,j})$ extends $\Gamma(\gamma_{i,j-1})$ by a node
$\tup{\SKIP, f_{j}}$, and an edge $\tup{r,f_{j-1}} \to \tup{\SKIP, f_{j}}$.
Thus, $\Gamma(\gamma_{i,j})$ is a valid alignment for $T_X|_{i}$ and $\rho_\alpha|_{j}$, and by minimality it is also optimal.
\item
Let $d_{i,j} = \alpha([P_=]_{i, j}) + d_{i-1,j-1}$.
By \defref{decoded:alignment},
we have $\gamma_{i,j} = \gamma_{i-1,j-1} \cdot \tup{e_i, f_j}$.
By the induction hypothesis, $\Gamma(\gamma_{i-1,j-1})$ is an optimal alignment
of $T_X|_{i-1}$ and $\rho_\alpha|_{j-1}$ with cost $d_{i-1,j-1}$.
In particular, $\Gamma(\gamma_{i-1,j-1})$ must have a node $\tup{r,f_{j-1}}$, for some $r$. The graph $\Gamma(\gamma_{i,j})$ extends $\Gamma(\gamma_{i-1,j-1})$ by a node
$\tup{e_1, f_{j}}$, an edge $\tup{r,f_{j-1}} \to \tup{e_i, f_{j}}$, as well as edges to $\tup{e_1, f_{j}}$ as induced by $T_X|_i$.
Thus $\Gamma(\gamma_{i,j})$ is a valid alignment of $T_X|_i$ and $\rho_\alpha|_{j}$,
because the log projection coincides with $T_X|_i$ by definition, and the model projection has the required additional edge.
By minimality of the definition of $d_{i,j}$, also $\Gamma(\gamma_{i,j})$ is optimal.
\end{compactitem}
%
% As $[P_=]$, $[P_L]$, and $[P_M]$ are correct encodings of  $P_=$, $P_L$, and $P_M$, respectively,
% by definition of a distance-based cost function (Def. \ref{def:distance}),
% $d_{i,j}$ is thus the optimal alignment cost of  $\logtrace|_i$ and $\rho_\alpha|_j$,
% and it is clear from the construction of $\gamma_{i,j}$ is an alignment that has exactly this cost.
\end{itemize}
For the case $i=m$ and $j=n$, we obtain that $\Gamma(\alpha)=\Gamma(\gamma_{m,n})$ is an optimal alignment of $\tracenet_X$ and $\rho_\alpha$ with cost $d_{m,n}=\alpha(\distvar_{m,n})$.
\qed
\end{proof}

}

\end{document}